\newcommand{\bigO}{\mathcal{O}}
\algrenewcommand\textproc{}%
\newtheorem{theorem}{Theorem}[section]
\newtheorem{lemma}[theorem]{Lemma}
\def\multiset#1{\ensuremath{\left(\kern-.2em\left(#1\right)\kern-.2em\right)}}
\begin{document}

\title{Counting Unreachable Single-Side Pawn Diagrams with Limitless Captures}
\author{Colin McDonagh \\ \href{mailto:cmcdonagh@cs.nuim.ie}{cmcdonagh@cs.nuim.ie}}
\maketitle

\begin{abstract}
Epiphainein counts unreachable single-side pawn diagrams (in chess) where pawns can move forward or diagonally-forward without limit whilst remaining on the board. Epiphainein is a serial calculation and takes a few seconds to calculate the number of unreachable diagrams on a regular $8 \times 8$ board. With a decent machine it should take roughly 4 hours to calculate the same on a $10 \times 10$ board.
\end{abstract}

\section{Introduction}
We count unreachable single-side pawn diagrams (in chess) found by attempting to match pawns to starting files, where a single side is either exclusively white or exclusively black, and a diagram \cite{diagrams} is the contents of a board's squares as opposed to a position which also accounts for side to move, castling rights and en-passant. Pawns can move forward or diagonally-forward without limit whilst remaining on the board, the latter of which can be seen as pawns capturing empty squares. The motivation for this project is Shirish Chinchalkar's ``An Upper Bound for the number of reachable positions'' \cite{unreachablePositions}. Finally the code is available at \cite{epiphainein}.

\section{Background}
\subsection{An Example}
Certain pawn diagrams are unreachable as pawns may only move forward or diagonally forward. Consider ``Fig.~\ref{unreachablePosition}''. In order for a pawn to be on \textit{a3}, it must have come from \textit{b2}. However, since there's a pawn on \textit{b2}, this diagram is unreachable.
\begin{figure}[H]
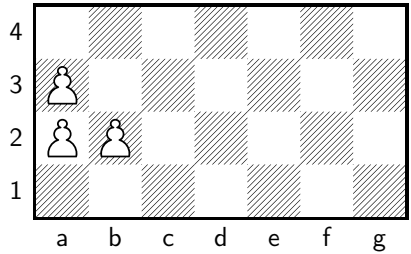

\begin{center}
\chessboard[setfen=8/P7/PP6/8 w - - 0 0,
            maxfield=g4,
            margintopwidth=0pt,
            showmover=false] 
\end{center}
\caption{The simplest unreachable diagram}
\label{unreachablePosition}
\end{figure}

We allow pawns to move diagonally forward without limit. In reality pawns can only move diagonally by capturing opposing chessmen, apart from the King.

\subsection{Naive Approach}
A pawn-square $s \in \mathcal{S}$ is a square which maps to files $\mu$ from which a pawn on $s$ could have started the game on, as shown in ``Fig.~\ref{originFiles}''. A set of pawn-squares $\nu \in 2^\mathcal{S}$ and the union of their potential starting files $M$ therefore form a bipartite graph $G = (\bigcup M, \nu, E)$.

\begin{figure}[H]
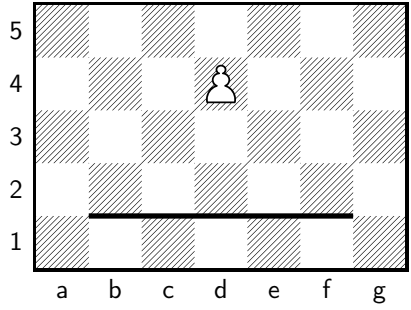

	\begin{center}
		\chessboard[setfen=8/3P8/8/8/8 w - - 0 0,
		maxfield=g5,
		pgfstyle=topborder,
		markregion={b1-f1},
		margintopwidth=0pt,
		showmover=false] 
	\end{center}
	\caption{$\mu_{d4} = [b, f]$}
	\label{originFiles}
\end{figure}

\begin{theorem} \label{diagramReachability}
A diagram is reachable iff its $G$ has a $\abs{\nu}$-perfect matching.
\end{theorem}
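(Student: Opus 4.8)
The plan is to prove the two implications separately, using the elementary fact implicit in the definitions that in this model each pawn move takes a pawn from rank $r$ to rank $r+1$ while changing its file by at most one, and that a white pawn begins on rank $2$ (symmetrically for black). For the ``only if'' direction I would show that any legal play producing the diagram induces a saturating matching: in such a play every pawn on the board is one of the original rank-$2$ pawns --- pawns are never created, and promotion removes a pawn --- so each $s\in\nu$ is occupied by a pawn with a well-defined origin square $(\mu^{\ast}(s),2)$, and distinct squares of $\nu$ carry distinct original pawns, making $\mu^{\ast}$ injective. A pawn on rank $r$ has made exactly $r-2$ moves, each shifting its file by at most one, so the file of $s$ differs from $\mu^{\ast}(s)$ by at most $r-2$; this is exactly $\mu^{\ast}(s)\in\mu_s$, so $\mu^{\ast}$ is a matching of $G$ saturating $\nu$.

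For the converse, given a matching $\phi$ of $G$ assigning each $s\in\nu$ a distinct file $\phi(s)\in\mu_s$, I would exhibit an explicit play. Order $\nu=\{s_1,\dots,s_k\}$ with $\operatorname{rank}(s_1)\ge\cdots\ge\operatorname{rank}(s_k)$ and, for $i=1,\dots,k$ in turn, walk the as-yet-unmoved pawn from $(\phi(s_i),2)$ to $s_i$ by making $\abs{f_i-\phi(s_i)}$ diagonal-forward steps toward the file $f_i$ of $s_i$ and then straight-forward steps --- in all $\operatorname{rank}(s_i)-2$ moves, which is feasible and stays on the board since $\abs{f_i-\phi(s_i)}\le\operatorname{rank}(s_i)-2$. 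The crucial point is that no collision occurs: when the $i$-th walk runs, every previously placed pawn sits on a square of $\{s_1,\dots,s_{i-1}\}\subseteq\nu$, each of rank $\ge\operatorname{rank}(s_i)$, whereas the walking pawn occupies only ranks $2,\dots,\operatorname{rank}(s_i)$ and reaches that top rank only with its final move, landing on $s_i$, which is distinct from every other square of $\nu$; meanwhile the unmoved pawns occupy rank $2$, which the walking pawn leaves on its first move and never revisits. Deleting the pawns never moved --- ``captured'' by the opposing side --- leaves exactly the diagram, so it is reachable.

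The mathematical content is light; the care lies in pinning down the model (what counts as a legal move, and that surplus pawns of the side in question may be removed) and in the collision analysis above, which I expect to be the main thing to get right. I would therefore spend most of the write-up making the decreasing-rank ordering argument watertight and checking the degenerate cases --- pawns already on rank $2$, where $\mu_s=\{f\}$ and the walk is empty, and ties among the $\operatorname{rank}(s_i)$.
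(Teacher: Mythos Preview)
Your proposal is correct. The paper itself offers only a one-sentence informal justification --- ``Every pawn-square in $\nu$ must be mapped to a starting square to be reachable, which is a $\abs{\nu}$-perfect matching in $G$'' --- which really only gestures at the forward implication and says nothing about the converse. Your forward direction is the natural fleshing-out of that sentence (trace each pawn back to its unique origin; the rank bound on file drift is exactly the edge relation of $G$), so there the approaches coincide.

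Where you add genuine content is the converse: the paper simply takes it for granted, whereas you give an explicit constructive realisation via the decreasing-rank ordering and verify that the walks are collision-free. That argument is sound --- the key observations that already-placed pawns sit on ranks $\ge\operatorname{rank}(s_i)$, that the moving pawn attains that rank only at its terminal square $s_i\notin\{s_1,\dots,s_{i-1}\}$, and that unmoved pawns stay on rank $2$ which is vacated after the first step, together close off every case, including rank ties and rank-$2$ targets. One small remark on the model: in this paper pawns live on the $n(n-2)$ squares of ranks $2$ through $n-1$, so promotion never arises and your parenthetical about it is harmless but unnecessary; your assumption that surplus starting pawns may be discarded is the intended reading (they are notionally captured by the absent opposing side).
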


Every pawn-square in $\nu$ must be mapped to a starting square to be reachable, which is a $\abs{\nu}$-perfect matching in $G$. This matching can be found in $\bigO(\abs{\nu})$ via Horn's greedy \textit{EDF} scheduling algorithm \cite{horn} \footnote{For ordered input, i.e. pawns ordered by rank then file (row then column respectively)}. However, there are $\sum_{i=0}^{n}\binom{n (n-2)}{i}$ distinct diagrams on an $n \times n$ board, so an \textit{EDF} based serial computation for $n=8$ would take in the order of days or maybe weeks.

\section{A Bottom-Up Attempt at Counting Unreachable Diagrams}
\subsection{Overview}
For a range of starting files $u$, there is a maximally large set of pawn-squares $v$ whose potential starting files are a subset of $u$: $$\forall u \exists v \ni \forall s \in v, u_s \subseteq u \land \forall s \notin v, u_s \not \subset u$$ ``Fig.~\ref{nonEdgeInterval}'' shows an example of $v$ on a board.

\begin{figure}[H]
	\begin{center}
		\chessboard[maxfield=g5,
		pgfstyle=topborder,
		pgfstyle=color,
		color=black,
		colorbackfields={
			b2,
			c2, c3,
			d2, d3, d4,
			e2, e3,
			f2
		},
		margintopwidth=0pt,
		showmover=false] 
	\end{center}
	\caption{$v_{[b, f]}$}
	\label{nonEdgeInterval}
\end{figure}

Due to theorem \ref{diagramReachability}, $u \in \mathcal{U}$ produce unreachable diagrams in which there are $> \abs{u}$ pawns within $v_u$. More generally, every unreachable diagram is produced by at least some $U \in 2^\mathcal{U}$ and its equivalent pawn-squares $V_U = (v_u ; u \in U)$ where $\forall u \in U$, $v_u$ contains more than $\abs{u}$ pawns. Once we have the number of unreachable diagrams produced by each $U \in 2^\mathcal{U}$, we use the principle of inclusion-exclusion to determine the total number of unreachable diagrams.

\subsection{High-level Enumeration} \label{enumerationSection}
We order $u \in \mathcal{U}$ by $(u_0, u_1)$. We look to enumerate the satisfiable subset of $U \in 2^\mathcal{U}$ as efficiently as possible, where $U$ is satisfiable if it produces unreachable diagrams. The following explain how we achieve this efficiency.

\begin{theorem} \label{edgeTheorem}
	 Let ``edge'' be a quality of $U$ whose $(\bigcup U)_l = 1 = a$, and ``non\_edge'' a quality of $U$ whose $(\bigcup U)_l = 2 = b$ and $(\bigcup U)_r < n$. Then $U$ whose $(\bigcup U)_l > 2$ are either a a displacement of some $U \in (2^\mathcal{U})_{non\_edge}$ within $[\![2, n-1]\!]$ or a reversal of some $U \in (2^\mathcal{U})_{edge}$ displaced such that the rightmost file of the reversal is the  $n^{th}$ file. \footnote{It's also true that we only need compute one member of a file-reflected pair $(\exists, E)$, $\exists, E \in (2^\mathcal{U})_{non\_edge}$. However, due to an  implementation detail and because computing $U \in (2^\mathcal{U})_{edge}$ dominates $U \in (2^\mathcal{U})_{non\_edge}$ we haven't implemented it}
\end{theorem}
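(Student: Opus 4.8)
The plan is to realize the ``displacement'' and ``reversal'' of the statement as concrete maps and reduce everything to a single invariance lemma about $\mathcal{U}$. Write $\sigma_c$ for the file-translation $i \mapsto i+c$, acting on an interval $u=[u_0,u_1]$ by $[u_0+c,u_1+c]$ and on a $U \in 2^\mathcal{U}$ elementwise, and write $\rho$ for the board reflection $i \mapsto n+1-i$; note $\rho^2=\mathrm{id}$ and $\rho\sigma_c=\sigma_{-c}\rho$. These are precisely a displacement and a reversal. The lemma I would establish first is: (i) $u\in\mathcal{U}\iff\rho(u)\in\mathcal{U}$, and (ii) if $u$ and $\sigma_c(u)$ both lie inside $[\![2,n-1]\!]$ then $u\in\mathcal{U}\iff\sigma_c(u)\in\mathcal{U}$; moreover both $\rho$ and such boundary-avoiding $\sigma_c$ preserve reachability of diagrams, hence satisfiability of $U$, so nothing is lost by keeping attention on the satisfiable subset.

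For (i), $\rho$ is a geometric symmetry of the board, so $\mu_{\rho(s)}=\rho(\mu_s)$; hence $v_{\rho(u)}=\rho(v_u)$, which has the same size as $v_u$, while $\abs{\rho(u)}=\abs{u}$, so the condition defining membership in $\mathcal{U}$ is preserved (and so is reachability of an arbitrary diagram, by the same covariance together with Theorem \ref{diagramReachability}). For (ii), the key observation is that when $u\subseteq[\![2,n-1]\!]$ no square $s\in v_u$ has its origin-file interval $\mu_s$ truncated by a board edge: a left truncation would force $1\in\mu_s$, contradicting $\mu_s\subseteq u\subseteq[\![2,n-1]\!]$, and a right truncation would force $n\in\mu_s$; so $\mu_s$ is the full symmetric interval about the file of $s$. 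Consequently $\sigma_c$ restricts to a bijection $v_u\to v_{\sigma_c(u)}$ whenever both intervals stay inside $[\![2,n-1]\!]$, which gives $\abs{v_u}=\abs{v_{\sigma_c(u)}}$ and hence (ii).

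Granting the lemma, I would split on $(\bigcup U)_r$, which (files run from $1$ to $n$) is either $<n$ or $=n$. If $(\bigcup U)_r<n$, set $c=(\bigcup U)_l-2\ge 1$ and $U'=\sigma_{-c}(U)$; then $\bigcup U\subseteq[\![3,n-1]\!]$ and $\bigcup U'\subseteq[\![2,n-2]\!]\subseteq[\![2,n-1]\!]$, so $U'\in 2^\mathcal{U}$ by the lemma, with $(\bigcup U')_l=2$ and $(\bigcup U')_r=(\bigcup U)_r-c\le n-2<n$, that is $U'\in(2^\mathcal{U})_{non\_edge}$, and $U=\sigma_c(U')$ is a displacement of $U'$ that keeps the bounding box inside $[\![2,n-1]\!]$. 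If $(\bigcup U)_r=n$, set $\bar U=\rho(U)$; then $(\bigcup\bar U)_l=n+1-(\bigcup U)_r=1$, so $\bar U\in(2^\mathcal{U})_{edge}$ by the lemma, and $U=\rho(\bar U)$ is the reversal of $\bar U$, already positioned with rightmost file $(\bigcup U)_r=n$. Using $\rho\sigma_c=\sigma_{-c}\rho$ one checks this is the same as first reversing $\bar U$ about the centre of its own bounding box and then displacing it rightward until its rightmost file is $n$, matching the wording of the theorem. This exhausts both cases.

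The hard part will be the translation half of the lemma — not the bijection itself but making precise that the squares of $v_u$ keep their unclipped origin-file intervals once $u\subseteq[\![2,n-1]\!]$, which is the geometric fact that turns displacement into a genuine symmetry away from the boundary. It also explains the shape of the theorem: $\mathcal{U}$ is genuinely \emph{not} translation-invariant across files $1$ and $n$ (edge effects change $v_u$ there), so one cannot collapse all such $U$ onto a single family by displacement alone, and the reflection $\rho$ is exactly what lets one trade the two edges for one another. If one prefers a more abstract defining condition for $\mathcal{U}$, the obstacle simply migrates to checking that condition is stable under $\rho$ and under boundary-avoiding translations, which the same two observations handle.
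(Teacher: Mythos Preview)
Your argument is correct. The case split on whether $(\bigcup U)_r<n$ or $(\bigcup U)_r=n$, followed by the left-shift $\sigma_{-c}$ with $c=(\bigcup U)_l-2$ in the first case and the board reflection $\rho$ in the second, explicitly exhibits the preimage $U'\in(2^{\mathcal U})_{non\_edge}$ or $\bar U\in(2^{\mathcal U})_{edge}$ the theorem asserts, and your checks that the resulting left and right endpoints land where they should are clean.

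The paper itself does not give a proof of this theorem. It treats the statement as essentially self-evident, offering only the remark that it is ``intuitive to just compute $U$ anchored on the $a$ file and produce $U\in 2^{\mathcal U}$ by considering images of anchored $U$'', together with Fig.~\ref{edgeVsNonedge} to explain why the \emph{edge} and \emph{non\_edge} families must be kept separate (because their $\abs{V}$ differ). Your invariance lemma---in particular the observation that for $u\subseteq[\![2,n-1]\!]$ no origin-file interval $\mu_s$ with $s\in v_u$ is clipped by the board edge, so that $\sigma_c$ restricts to a bijection $v_u\to v_{\sigma_c(u)}$---is exactly the geometric content that figure is illustrating. Strictly speaking this goes beyond what the bare statement requires (the theorem as phrased concerns only the intervals in $U$, not $V$ or satisfiability), but it is precisely what the paper relies on implicitly when it later treats displaced and reflected $U$ as interchangeable for counting purposes, so the extra work is not wasted.
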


It's intuitive to just compute $U$ anchored on the $a$ file and produce $U \in 2^\mathcal{U}$ by considering images of anchored $U$. However it's necessary to have $edge$ and $non\_edge$ categories because they have different $\abs{V}$ as shown in ``Fig.~\ref{edgeVsNonedge}''.

\begin{figure}[H]
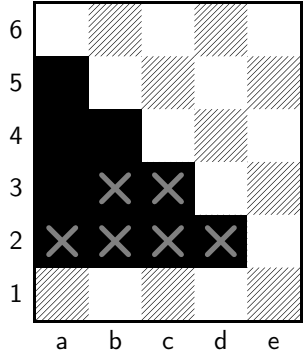

	\begin{center}
		\chessboard[maxfield=e6,
		pgfstyle=topborder,
		pgfstyle=color,
		color=black,
		colorbackfields={
			a2, a3, a4, a5,
			b2, b3, b4,
			c2, c3,
			d2
		},
		padding=-0.45em,
		color=gray,
		pgfstyle=cross,
		shortenstart=0.5ex,
		shortenend=0.5ex,
		markfields={
			a2,
			b2, b3,
			c2, c3,
			d2
		},
		margintopwidth=0pt,
		showmover=false] 
	\end{center}
	\caption{$V_1 = (v_{[a, d]})$ is filled with a black background. $V_2$ is equal to $(v_{[b, e]})$ left-shifted one square and is marked with gray $X$s. Although $U_1 = U_2 = ([a, d])$,  $\abs{\bigcup V_1} \neq \abs{\bigcup V_2}$}
	\label{edgeVsNonedge}
\end{figure}

\begin{theorem} \label{discontinuity}
	Let a continuous $U$ have the property that $\abs{\bigcup U} = (\bigcup U)_r - (\bigcup U)_l + 1$. Then non-continuous $U$ are multisets of continuous $U$.
\end{theorem}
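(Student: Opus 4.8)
The plan is to treat $\bigcup U$ as a subset of the file line $[\![1,n]\!]$ and decompose it into its maximal contiguous blocks, just as one decomposes any finite subset of $\mathbb{Z}$ into runs. Write $\bigcup U = B_1 \sqcup B_2 \sqcup \cdots \sqcup B_k$, where each $B_i$ is a maximal interval of consecutive files and consecutive blocks are separated by at least one file absent from $\bigcup U$. By the definition given in the statement, $U$ is continuous exactly when $k = 1$, so for a non-continuous $U$ we have $k \ge 2$.

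First I would show each $u \in U$ is contained in a single block. Every $u \in \mathcal{U}$ is by construction an interval of files; if $u$ contained a file of $B_i$ and a file of $B_j$ with $i < j$, then, being an interval, $u$ would contain every file between those two, in particular a file lying strictly between $B_i$ and $B_j$. Such a file is absent from $\bigcup U$ by maximality of the blocks, contradicting $u \subseteq \bigcup U$. Hence $u \mapsto$ (the unique $i$ with $u \subseteq B_i$) is well defined and yields a partition $U = U_1 \sqcup \cdots \sqcup U_k$ with $U_i := \{u \in U : u \subseteq B_i\}$.

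Next I would check each $U_i$ is continuous: $\bigcup U_i \subseteq B_i$ is immediate, and conversely any $f \in B_i \subseteq \bigcup U$ lies in some $u \in U$, which by the previous paragraph lies in $U_i$, so $f \in \bigcup U_i$. Thus $\bigcup U_i = B_i$, a contiguous interval, so $U_i$ satisfies $\abs{\bigcup U_i} = (\bigcup U_i)_r - (\bigcup U_i)_l + 1$. Consequently $U = \multiset{U_1, \ldots, U_k}$ realises $U$ as a multiset of continuous members of $2^\mathcal{U}$ — genuinely a multiset rather than a set once continuous $U$'s are identified up to the displacements and reversals of Theorem \ref{edgeTheorem}, since two distinct blocks may carry the same continuous shape.

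I expect the one delicate point to be the first step: it is exactly the interval property of the members of $\mathcal{U}$ that forbids a single $u$ from bridging a gap of $\bigcup U$, and hence lets the block partition of $\bigcup U$ lift to a partition of $U$. Everything after is bookkeeping. One further remark worth making (not part of the theorem proper, but the reason the decomposition is useful) is that the associated pawn-square families also split, $V_U = \bigsqcup_i V_{U_i}$ with the unions $\bigcup V_{U_i}$ pairwise disjoint, because each square's origin-file set is itself an interval and so likewise cannot straddle a gap; this is what makes the per-block contributions combine multiplicatively and reduces the whole count to continuous $U$.
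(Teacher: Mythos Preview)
Your argument is correct and is exactly the decomposition the paper has in mind: split $\bigcup U$ into its maximal runs of consecutive files and observe that, because every $u \in \mathcal{U}$ is itself an interval, no $u$ can straddle a gap, so $U$ breaks into continuous pieces $U_1,\ldots,U_k$. The paper, however, does not spell any of this out; it simply follows the theorem with the one-line remark ``Naturally, $U$ which are non-continuous can be split into continuous subsets which we only have to consider once,'' treating the claim as self-evident. Your write-up supplies the justification the paper omits, and your closing observation that the associated $V_{U_i}$ are pairwise disjoint (so the counts combine multiplicatively) is precisely the reason the paper cares about this decomposition.
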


Naturally, $U$ which are non-continuous can be split into continuous subsets which we only have to consider once, and then use to produce every $U \in 2^\mathcal{U}$.

\begin{theorem} \label{unsatCore}
	Let any unsatisfiable $U$ which we've had to enumerate be an ``unsat core''. Then a U which contains a file-reflection and/or displacement of a core is also unsatisfiable.
\end{theorem}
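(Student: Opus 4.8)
The plan is to present file-reflection and displacement as members of a small family $\Gamma$ of transformations that act compatibly on starting-file ranges, on pawn-squares and on diagrams, and then to combine two soft facts: satisfiability is $\Gamma$-invariant, and an unsatisfiable family stays unsatisfiable when enlarged. Granting these, if $C$ is an unsat core and $U'$ contains $\phi(C)$ for some $\phi\in\Gamma$, then $\phi(C)$ is unsatisfiable by the first fact and $U'$ is unsatisfiable by the second — which is the theorem.

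First I would pin down the transformations. A displacement $d_k$ sends file $f\mapsto f+k$ and file-reflection $\rho$ sends $f\mapsto n+1-f$; each extends to pawn-squares by fixing the rank and moving the file, to a range $u=[u_0,u_1]$, to $U$ and $V_U$ elementwise, and to a diagram by relabelling its occupied squares. Theorem~\ref{edgeTheorem} is precisely the statement that it suffices to consider such images, and it is also where the admissible displacements get delimited: $\rho$ is a genuine board automorphism and is harmless, whereas $d_k$ is only partial, so I restrict to those $d_k$ that keep the core inside the interior band $[\![2,n-1]\!]$ in which origin-file sets are never clipped by the board boundary. Since a pawn-square on file $f$ and rank $r$ has $\mu_s$ equal to $[\,f-(r-2),\,f+(r-2)\,]$ intersected with the board, inside this band $\mu_{\phi(s)}=\phi(\mu_s)$ for each generator $\phi$; taking unions and passing to the maximal sets of pawn-squares that define $v$ then gives $v_{\phi(u)}=\phi(v_u)$, and injectivity of $\phi$ on files gives $\abs{\phi(u)}=\abs{u}$. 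I would verify this intertwining separately for the translation $d_k$ (a shift of a closed interval) and the reversal $\rho$, and note that it is closed under composition, so it holds for every $\phi\in\Gamma$.

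Then I would establish invariance of satisfiability. Recall $U$ is satisfiable exactly when some diagram $D$ places more than $\abs{u}$ pawns inside $v_u$ for every $u\in U$; by Theorem~\ref{diagramReachability} any such $D$ is unreachable, so this is the same as $U$ producing unreachable diagrams. Given such a $D$, form the pushforward $\phi_{\ast}D$ that has a pawn on $s$ iff $D$ has a pawn on $\phi^{-1}(s)$ and is empty elsewhere on the board; this is again a diagram, and by the previous paragraph together with injectivity of $\phi$ it places more than $\abs{\phi(u)}$ pawns inside $v_{\phi(u)}$ for every $\phi(u)\in\phi(U)$, so $\phi(U)$ is satisfiable. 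The same argument with $\phi^{-1}$ gives the converse, so $U$ and $\phi(U)$ are satisfiable together. For the enlargement fact: if $U\subseteq U'$ and $D$ witnesses satisfiability of $U'$, the same $D$ witnesses satisfiability of $U$ because it meets a subfamily of the constraints; hence unsatisfiability is upward closed. Combining the two facts as in the first paragraph would finish the proof.

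The hard part will be the edge behaviour hidden in the second paragraph. A displacement that pushes part of the structure against the $a$- or $n$-file truncates some $\mu_s$, and truncation can make $v_u$ strictly larger relative to $\abs{u}$ — exactly the mismatch $\abs{\bigcup V_1}\neq\abs{\bigcup V_2}$ of Fig.~\ref{edgeVsNonedge} — so near the boundary a displacement could in principle turn an unsatisfiable family into a satisfiable one. The care needed is therefore in stating the admissible range of displacements tightly enough to cover every displacement the enumeration of Section~\ref{enumerationSection} actually performs while excluding exactly the clipping cases; file-reflection, being a global symmetry of the board, carries no such caveat.
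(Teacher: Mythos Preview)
Your proposal is sound and substantially more complete than what the paper itself supplies: the paper states Theorem~\ref{unsatCore} without proof, following it only with a description of the dictionary lookup that detects cores during enumeration. Your two-step scheme --- satisfiability is invariant under the board symmetries in $\Gamma$, and unsatisfiability is upward closed under inclusion --- is the natural decomposition, and each step is immediate once isolated.

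Your handling of the boundary is also the right diagnosis. A displacement that pushes the core against file $a$ or file $n$ breaks the intertwining $v_{\phi(u)}=\phi(v_u)$ because $v_u$ grows at the edge (Fig.~\ref{edgeVsNonedge}), so restricting the invariance claim to interior-preserving shifts and to the global reflection $\rho$ is correct. The one direction your symmetry argument does not cover --- displacing an edge core into the interior --- is handled separately in the paper by the theorem just before Theorem~\ref{lengthenedu}, and its justification is monotonicity rather than symmetry: shrinking each $v_u$ while holding $\abs{u}$ fixed can only make the over-saturation constraints harder to meet. Between your invariance argument and that observation, every displacement the enumeration actually performs is accounted for.
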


We check if $U$ contains a core by iterating over a copy of $U$; $M$. On each iteration check if $M$ displaced s.t$.$ $(\bigcup M)_l = 1$ is in an entry in our $cores$ dictionary; if so, $U$ is unsatisfiable, otherwise pop $M_1$ and continue until finding a core or $\abs{M} = 0$. This routine is preferable to checking satisfiability.

\begin{lemma} \label{noAlwaysUnsatu}
	Any $U$ which contains a $u$ for which $\abs{u} = \abs{v}$ or $\abs{u} = n$ is unsatisfiable.
\end{lemma}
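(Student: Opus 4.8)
The plan is to reduce the satisfiability of $U$ to a capacity condition on the regions $v_u$, and then to observe that the two offending kinds of $u$ already violate their own single-region condition, independently of the rest of $U$.

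First I would record two elementary facts about a file-interval $u$ and its region $v_u$. (i) $\abs{v_u}\ge\abs{u}$: every rank-$2$ square whose file lies in $u$ belongs to $v_u$, since a pawn on such a square can only have started on its own file (cf.\ Fig.~\ref{originFiles}); and trivially $\abs{u}\le n$, as $u$ is an interval of the $n$ files. (ii) Every square of $v_u$ has all of its potential starting files inside $u$, so in any diagram the occupied pawn-squares lying in $v_u$ collectively see at most $\abs{u}$ starting files. Next I would pin down the operative meaning of satisfiability behind Theorem~\ref{diagramReachability} and the definitions opening the bottom-up section: a diagram is ``produced by $U$'' exactly when, for every $u\in U$ simultaneously, it places strictly more than $\abs{u}$ pawns among the squares of $v_u$, and by (ii) and Hall's theorem such a diagram has no $\abs{\nu}$-perfect matching, hence is unreachable; so $U$ is satisfiable iff some such diagram exists. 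In particular a necessary condition for $U$ to be satisfiable is that \emph{each} $u\in U$ individually admit a diagram with more than $\abs{u}$ pawns inside $v_u$.

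It then suffices to refute that single-region condition for the two kinds of $u$. Bound the number of pawns any counted diagram places inside $v_u$ by $\min\{\abs{v_u},\,t\}$, where $t$ is the total pawn count of the diagram; here $t\le n$, because the diagrams we enumerate are precisely those with $i\le n$ pawns (the sum $\sum_{i=0}^{n}\binom{n(n-2)}{i}$). If $\abs{u}=\abs{v}$ then this bound equals $\abs{v_u}=\abs{u}$, so no diagram can exceed $\abs{u}$ pawns in $v_u$; if $\abs{u}=n$ then the bound is $t\le n=\abs{u}$, again never exceeding $\abs{u}$. In either case the single-region condition fails for that $u$, so by the necessary condition above any $U$ containing such a $u$ is unsatisfiable.

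I do not anticipate a real obstacle here: the content is a pigeonhole once the framework is in place. The only mildly delicate points are justifying $\abs{v_u}\ge\abs{u}$ from the geometry of pawn origins and invoking the global cap $t\le n$ on the pawn count of a counted diagram; after those, both cases close in one line.
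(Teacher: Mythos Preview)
Your argument is correct: the two cases reduce to the obvious capacity bounds $\min\{\abs{v_u},t\}$ with $t\le n$, and both fail to exceed $\abs{u}$. The paper itself gives no proof of this lemma at all---it is stated as self-evident, with the immediately following sentence simply assuming such $u$ are excluded henceforth. So there is nothing to compare against; your write-up supplies the justification the paper omits, and the pigeonhole reasoning you use (at most $\abs{v_u}$ squares available, at most $n$ pawns total) is exactly the intended content.
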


Assume that references to $U$ hereon are to a $U$ which don't contain always unsat $u$ to avoid introducing unnecessary notation.

\begin{lemma}
	If $U$ is unsatisfiable then $\forall P, n < \sum_{v \in V}\abs{p_v}$ where $P = \{p_1, p_2, ..., p_{\abs{V}}\}$ are the pawns in $V$ and $p_i > \abs{U_i}$.
\end{lemma}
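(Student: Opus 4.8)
The plan is to prove the contrapositive, in the slightly sharper form: if $\sum_{u\in U}(\abs{u}+1)\le n$ then $U$ is satisfiable. This is enough, since the cheapest admissible demand tuple is $p_i=\abs{U_i}+1$, so an admissible $P$ with $\sum_{v\in V}\abs{p_v}\le n$ exists precisely when $\sum_{u\in U}(\abs{u}+1)\le n$; hence $U$ unsatisfiable forces $\sum_{u\in U}(\abs{u}+1)>n$, and then $\sum_{v\in V}\abs{p_v}\ge\sum_{u\in U}(\abs{u}+1)>n$ for every admissible $P$.

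To prove the contrapositive I would exhibit an explicit diagram that is produced by $U$. Step one: put a pawn on the rank-$2$ square of every file occurring in $\bigcup U$, using exactly $\abs{\bigcup U}$ pawns. The only origin file of the rank-$2$ square on file $f$ is $\{f\}$, so the rank-$2$ squares lying in $v_u$ are exactly the $\abs{u}$ squares on the files of $u$; after step one each region $v_u$ therefore contains precisely $\abs{u}$ pawns and every square of every region above rank $2$ is empty. Step two: for each $u\in U$ choose a square $s_u\in v_u$ that is not on rank $2$ and, if it is not already occupied, put a pawn on it. Such an $s_u$ exists because the standing assumption after Lemma~\ref{noAlwaysUnsatu} rules out $\abs{u}=\abs{v_u}$, and since $v_u$ contains all $\abs{u}$ of the rank-$2$ squares of $u$ this forces $\abs{v_u}>\abs{u}$; moreover $s_u\in v_u\subseteq\mathcal{S}$, so it is automatically a legal pawn-square (rank between $3$ and $n-1$), and step two adds at most $\abs{U}$ pawns.

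It remains to verify the two required properties. The pawn count is $\abs{\bigcup U}+\abs{\{\,s_u:u\in U\,\}}\le\abs{\bigcup U}+\abs{U}\le\sum_{u\in U}\abs{u}+\abs{U}=\sum_{u\in U}(\abs{u}+1)\le n$, using only the trivial covering bound $\sum_{u\in U}\abs{u}\ge\abs{\bigcup U}$ (so Theorem~\ref{discontinuity} is not even needed here); and no square carries two pawns, since step one touches only rank $2$ and step two re-uses rather than re-fills an occupied $s_u$. Each $v_u$ now holds its $\abs{u}$ rank-$2$ pawns together with the pawn on $s_u$, i.e.\ strictly more than $\abs{u}$ pawns, so by Theorem~\ref{diagramReachability} the diagram is unreachable and is produced by $U$; hence $U$ is satisfiable.

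The only delicate point is the bookkeeping: that the rank-$2$ fill gives each $v_u$ exactly $\abs{u}$ pawns and not more, that the extra square $s_u$ is guaranteed by the $\abs{u}\ne\abs{v_u}$ exclusion rather than by any edge-of-board consideration, and that re-using $s_u$'s across overlapping regions keeps the global count at $\abs{\bigcup U}+\abs{U}\le n$. Once those are pinned down the rest is the covering inequality and the remark that $v_u\subseteq\mathcal{S}$ already excludes ranks $1$ and $n$.
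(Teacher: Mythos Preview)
Your argument is correct, and it is more explicit than the paper's. The paper's own proof takes the opposite extreme: rather than building a minimal witness, it observes that with an unlimited supply one can fill every square of $\bigcup V$, which (under the standing exclusion $\abs{u}\ne\abs{v_u}$) gives each $v_u$ strictly more than $\abs{u}$ pawns; hence the only thing that can prevent $U$ from producing an unreachable diagram is the cap of $n$ pawns. Your construction instead places pawns only on the rank-$2$ row of $\bigcup U$ together with one off-rank square per $u$, and this pins the total down to at most $\abs{\bigcup U}+\abs{U}\le\sum_{u\in U}(\abs{u}+1)$. That buys you the quantitative inequality in one stroke, whereas the paper leaves to the reader the short step from ``unsatisfiability is caused by the pawn limit'' to the precise statement $\sum_{v}\abs{p_v}>n$. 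Conversely, the paper's fill-everything argument needs none of your bookkeeping about overlaps, reused $s_u$'s, or the existence of an off-rank square; each approach trades brevity against rigour in the obvious way.
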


\begin{proof}
	$\forall u$, $u$ is potentially satisfiable as assumed from lemma \ref{noAlwaysUnsatu} onwards. If there are unlimited pawns available, then we can always produce unreachable diagrams by placing a pawn in every square $s \in \bigcup V$. Therefore any unsatisfiability must be caused by our limited pawn supply of $n$ pawns.
\end{proof}

\begin{theorem}
	If some $U \in (2^\mathcal{U})_{non\_edge}$ is unsatisfiable and $U^\prime$ is the ``non-edge equivalent'' of $U$, i.e$.$ $U$ right-shifted one file, then $U^\prime$ is also unsatisfiable.
\end{theorem}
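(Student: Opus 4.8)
The plan is to recognise that the passage from $U$ to its ``non-edge equivalent'' $U'$ is nothing but a rigid one-file translation of the whole configuration, and that (un)satisfiability --- which depends only on the sizes of the pawn sets lying inside the regions $v_u$ --- is invariant under such a translation. Concretely I would split the argument into two steps: (i) show that $\bigcup V_{U'} = (\bigcup V_U) + (1,0)$, with the shift carrying each region $v_u$ onto $v_{u+1}$ (here $u+1$ means $u$ shifted one file right) and preserving $\abs{u}$; and (ii) transport witness diagrams along this shift.

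For step (i) I would unwind the definition $v_{[\ell,r]} = \{\, s = (f,\rho) : \rho \ge 2,\ \mu_s \subseteq [\ell,r] \,\}$, where $\mu_s = [\max(1, f-(\rho-2)),\ \min(n, f+(\rho-2))]$ is the origin-file interval of the square on file $f$, rank $\rho$. The non-edge hypotheses on $U$ --- $(\bigcup U)_l = 2$ and $(\bigcup U)_r < n$ --- force every interval of $U$ to lie inside $[2, n-1]$, so throughout each $v_u$ neither the clamp to file $1$ nor the clamp to file $n$ is ever active, and membership in $v_{[\ell,r]}$ collapses to the translation-invariant inequality $\ell + (\rho-2) \le f \le r - (\rho-2)$. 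Hence $v_{[\ell+1, r+1]}$ is exactly $v_{[\ell,r]}$ moved one file to the right, and it still fits on the board since its right endpoint is at most $n$. The same computation shows the shift preserves $\abs{v_u}$, so (using the running assumption, via Lemma~\ref{noAlwaysUnsatu}, that $U$ contains no always-unsat $u$) $U'$ contains none either and is a legitimate configuration. The one place needing care is the right boundary: if some interval of $U$ already ended on file $n-1$, the shifted region would pick up extra squares against the edge and the identity $v_{u+1} = v_u + (1,0)$ would break --- but this is precisely the case excluded by taking $U'$ to be a genuine interior (``non-edge'') displacement in the sense of Theorem~\ref{edgeTheorem}.

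For step (ii), unwinding satisfiability through Theorem~\ref{diagramReachability} gives: $U$ is satisfiable iff there is a diagram $D \subseteq \bigcup V_U$ using at most $n$ pawns with $\abs{D \cap v_u} > \abs{u}$ for every $u \in U$ (pawns placed outside $\bigcup V_U$ never help, so we may restrict to $\bigcup V_U$). Let $T$ be the one-file right shift. By step (i), $T$ restricts to a bijection $\bigcup V_U \to \bigcup V_{U'}$ carrying each $v_u$ onto $v_{u+1}$; it preserves cardinalities and keeps squares on the board, so $D$ is a valid witness for $U$ iff $T(D)$ is a valid witness for $U'$: the pawn budget is untouched, $\abs{T(D)} = \abs{D} \le n$, and $\abs{T(D) \cap v_{u+1}} = \abs{D \cap v_u} > \abs{u} = \abs{u+1}$. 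Thus $U$ and $U'$ are satisfiable together, and the contrapositive is exactly the assertion. I expect the boundary bookkeeping of step (i) to be the only real obstacle; once the translation is seen to be clean, step (ii) is immediate. (As an aside, $U'$ is itself a displacement of $U$, so if $U$ has been recorded as an unsat core the claim also drops out of Theorem~\ref{unsatCore}; the argument above is the self-contained version of that reduction.)
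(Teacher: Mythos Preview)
Your proof and the paper's take genuinely different routes. You argue that the one-file right shift is a rigid translation carrying each $v_u$ bijectively onto $v_{u+1}$, so that satisfiability is \emph{preserved} --- indeed your argument yields an if-and-only-if. The paper does not claim a bijection: its one-line justification is that each $v' \in V'$ is ``the same size as or smaller than'' its counterpart in $V$ while each $u' \in U'$ keeps the same length, so the minimum total pawn count --- already forced above $n$ for $U$ by the preceding lemma --- cannot decrease for $U'$. That is a one-sided monotonicity argument, not an equivalence.

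The ``or smaller'' clause, together with the application drawn immediately afterwards (edge unsat cores are fed into the non-edge enumeration), strongly suggests that the printed hypothesis $U \in (2^\mathcal{U})_{non\_edge}$ is a misprint for $(2^\mathcal{U})_{edge}$; the phrase ``non-edge equivalent of $U$'' then reads naturally, and the shrinkage is exactly the phenomenon of Fig.~\ref{edgeVsNonedge}. For that intended edge\,$\to$\,non-edge direction your left-hand clamp \emph{is} active, the identity $v_{u+1} = v_u + (1,0)$ fails (one has only the inclusion $v_{u+1} \subsetneq v_u + (1,0)$), and your bijection argument collapses --- only the paper's monotonicity survives. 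Conversely, for the statement as literally printed and away from the right boundary you flag, your invariance argument is valid and in fact proves more than the paper asserts; in that regime the paper's ``or smaller'' is vacuous and the two arguments agree.
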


$v \in V^\prime$ are either the same size as or smaller than their counterparts whereas $u \in U^\prime$ are identical to their counterparts s.t$.$ $U^\prime$ also requires $>n$ pawns. Therefore we first enumerate the entirety of $(2^\mathcal{U})_{edge}$ and feed the collected \textit{unsat cores} into a separate enumeration of $(2^\mathcal{U})_{non\_edge}$.

\begin{theorem} \label{lengthenedu}
	If $U + u$ is unsatisfiable and $\mu$ is $u$ right-lengthened and/or right-shifted then $U + \mu$ is also unsatisfiable.
\end{theorem}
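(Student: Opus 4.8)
The plan is to argue the contrapositive through the forced pawn count. By the two lemmas preceding this theorem, a collection is unsatisfiable exactly when \emph{every} diagram producing it uses more than $n$ pawns. Hence it is enough to show: from any diagram $D$ producing $U + \mu$ one can build a diagram $D'$ producing $U + u$ with $\abs{D'} \le \abs{D}$. Given this, a witness that $U + \mu$ is satisfiable (a diagram with $\le n$ pawns) yields a witness that $U + u$ is satisfiable, and contraposing gives the statement.

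First I would reduce to atomic moves: $\mu$ is obtained from $u$ by finitely many single-file right-shifts and single-file right-lengthenings, and the implication ``$U + (\cdot)$ stays unsatisfiable'' composes, so it suffices to treat $\mu$ differing from $u$ by exactly one file in one of the two ways. In each case I record two elementary facts. First, $\abs{\mu} \ge \abs{u}$ (equality for a shift, $\abs{u}+1$ for a lengthening). Second, $v_u$ has room for $\abs{u}+1$ pawns: the rank-$2$ row of $v_u$ alone has $\abs{u}$ squares, so $\abs{v_u} \ge \abs{u}$ always, and $\abs{v_u} = \abs{u}$ is excluded by Lemma~\ref{noAlwaysUnsatu}, hence $\abs{v_u} \ge \abs{u}+1$.

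Now the relocation. Since $D$ produces $U + \mu$ it places at least $\abs{\mu}+1 \ge \abs{u}+1$ pawns inside $v_\mu$; fix an $(\abs{u}+1)$-element subset $X$ of them. I would look for an injection $\phi : X \to v_u$ such that for every $u' \in U$, $\phi$ sends $X \cap v_{u'}$ into $v_{u'}$. Given such a $\phi$, put $D' = (D \setminus X) \cup \phi(X)$: this is a legal diagram with $\abs{D'} \le \abs{D}$, it has $\ge \abs{u}+1$ pawns in $v_u$ because $\phi(X) \subseteq v_u$, and for each $u' \in U$ injectivity of $\phi$ gives $\abs{D' \cap v_{u'}} \ge \abs{D \cap v_{u'}} - \abs{X \cap v_{u'}} + \abs{\phi(X) \cap v_{u'}} \ge \abs{D \cap v_{u'}} \ge \abs{u'}+1$, so $D'$ produces $U + u$. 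To produce $\phi$ I would use the geometry together with the enumeration order: $v_u$ is exactly $v_\mu$ pushed one file to the left and/or clipped by one file on the right, and in our enumeration $u$ is the last (largest-$u_0$) interval of $U+u$, so $v_\mu$ and $v_u$ sit to the right of the regions $v_{u'}$; moving a pawn of $X$ one file left (and, if it would then fall off $v_u$, sliding it back down its NW--SE diagonal toward rank $2$) never leaves a $v_{u'}$ it started in, and by the room bound the images can be kept distinct.

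The step I expect to be the real obstacle is constructing and verifying $\phi$ — making precise that this leftward/downward relocation lands inside $v_u$, stays injective, and respects every $v_{u'}$ at once. The clean route is probably to recast the existence of $\phi$ as a Hall condition on the sets $T(x) = v_u \cap \bigcap\{\, v_{u'} : u' \in U,\ x \in v_{u'} \,\}$ for $x \in X$, and then check $\big|\bigcup_{x \in Z} T(x)\big| \ge \abs{Z}$ for all $Z \subseteq X$ using that every region in play is a translate-and-clip of a triangular set of the form $\{(f,r): r \ge 2,\ \ell + (r-2) \le f \le \rho - (r-2)\}$ and that $v_u$ is precisely the one-file left-shift/right-clip of $v_\mu$. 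That reduces the theorem to a finite verification of how these triangles overlap, which I expect to go through but is the part that needs care.
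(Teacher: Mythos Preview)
The paper gives only a two-sentence heuristic here (``right-lengthening $u$ increases its pawn requirement\ldots\ right-shifting $u$ reduces already insufficient overlaps''), so your contrapositive-plus-relocation plan is already attempting considerably more than the paper does, and the overall shape is reasonable. There is, however, a concrete error in the bookkeeping. Having set $D' = (D \setminus X) \cup \phi(X)$ and explicitly allowed $\phi(X)$ to collide with $D \setminus X$ (that is how you obtain $\abs{D'} \le \abs{D}$), you then assert
\[
\abs{D' \cap v_{u'}} \;\ge\; \abs{D \cap v_{u'}} - \abs{X \cap v_{u'}} + \abs{\phi(X) \cap v_{u'}},
\]
but this inequality points the wrong way: $D' \cap v_{u'}$ is the union of $(D\setminus X)\cap v_{u'}$ and $\phi(X)\cap v_{u'}$, so the right-hand side is an \emph{upper} bound, with equality only when $\phi(X)\cap(D\setminus X)\cap v_{u'}=\emptyset$. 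As written you have not shown that the constraints $\abs{D'\cap v_{u'}} > \abs{u'}$ survive the relocation.

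The repair is to fold disjointness from $D\setminus X$ into the requirements on $\phi$: in your Hall formulation take $T(x) = \bigl(v_u \setminus (D \setminus X)\bigr) \cap \bigcap\{\,v_{u'} : u' \in U,\ x \in v_{u'}\,\}$, and then the Hall count must absorb the already-occupied squares of $v_u$ --- this is where the room bound $\abs{v_u} \ge \abs{u}+1$ really earns its keep. Your geometric picture (the pawns needing to move sit on the right edge of $v_\mu$; the enumeration order forces $u'_l \le u_l$, so a one-file left move cannot cross the left boundary of any $v_{u'}$ that contained the pawn) is essentially the content of the paper's ``reduces overlaps'' remark and does go through for the left move; it is the sliding-down step and the disjointness bookkeeping that remain to be pinned down.
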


Right-lengthening $u$ increases its pawn requirement above what was already unsatisfiably high. Right-shifting $u$ reduces already insufficient overlaps between $v$ and previous $\nu \in V$.

In light of the above, our enumeration is defined in ``Fig.~\ref{highLevelEnumAlg}''.
 
\begin{figure}[tp]
	\caption{High-level Enumeration}\label{highLevelEnumAlg}
	\begin{algorithmic}
		\State $unsat\_cores \gets$ \Call{enumerate}{$true$, $\emptyset$}
		\State \Call{enumerate}{$false$, $unsat\_cores$}
		\Function{enumerate}{$edge$, $unsat\_cores$}
			\State $unsat\_cores \gets unsat\_cores$ or $dict()$
			\State $u$ $\gets$ $\emptyset$
			\State $U \gets (u)$
			\State $sat \gets true$ \Comment{We consider $\emptyset$ initially pseudo-satisfiable}
			\State $max\_uLen \gets initMax\_uLen$ \Comment{$n-1$ ? $edge$ : $n-2$}
			\State \Comment{$max\_uLen$ tracks the longest $u$ which could potentially be added to $U$}
			\While{$U \nin END\_Us$} \Comment $\abs{END\_Us} = 2$
				\If{$sat$}
					\State \Call{append}{$U$, \Call{lexicographicallyNext\_u}{$U_{-1}$, $max\_uLen$}}
				\Else
					\State $toPop \gets$ \Call{num\_uToBacktrack}{$U$}
					\State $\mu \gets$ \Call{popNReturnLastPopped\_u}{$U$, $toPop$}
					\State \Comment{If one item is popped we return $U_{-1}$, if two then $U_{-2}$}
					\State $max\_uLen \gets (U_{-1_r} - U_{-1_l})$ \textbf{if} $toPop = 1$ \textbf{else} $initMax\_uLen$
					\State \Call{append}{U, \Call{lexicographicallyNext\_u}{$\mu$, $max\_uLen$}}
				\EndIf
			\If {any unsat core $\in U$} \Comment{Theorem \ref{unsatCore}}
			\State $sat \gets false$
			\State \textbf{continue}
			\EndIf
			\State $sat \gets$ \Call{countAndRecord}{$U$}
			\If {$!sat$}
			\State $unsat\_cores[U] = true$
			\State $unsat\_cores[$\Call{reverse}{$U$}$] = true$
			\EndIf
			\EndWhile
			\Return $unsat\_cores$
		\EndFunction
		\Function{lexicographicallyNextCandidate\_u}{$u$, $max\_uLen$}
			\If {$u_{new}$ = $(u_l, u_r + 1)$ isn't longer than $max\_uLen$ and $u_r + 1 < max\_ULen$} \Return $u_{new}$ \Comment{Theorem \ref{lengthenedu}. $maxULen = n$ ? $edge$ : $n-2$}
			\Else
			\State \Return $(u_l + 1, min(u_l + 1 + minNonEdge\_uLen, maxULen))$
			\State \Comment{$minNonEdge\_uLen =$ 3}
			\EndIf
		\EndFunction
		\Function{num\_uToBacktrack}{$U$}
			\If {the lexicographically last $u \in \mathcal{U}$ is already in $U$ or replacing $U_{-1}$ with $u$, $ U_{-1} <_{lex} u$, would result in unsatisfiability or discontinuity} \Return 2 
			\State \Comment{Theorem \ref{discontinuity}}
			\Else\ \Return 1
			\EndIf
		\EndFunction
	\end{algorithmic}
\end{figure}

\subsection{Diagrams contained within \textit{V}}
For a given $U$ we catalogue all unreachable diagrams contained within the corresponding $V$, i.e. where every pawn in the unreachable diagram is within $V$. To make this cataloging easier we first create a partition $\lambda$, $\lambda \vdash V$, with as few possible parts s.t$.$ $\forall \rho \in \lambda$ every square $s \in \rho$ is contained by the same subset of $U$. And we achieve this with a partition refinement strategy.

\subsubsection{Partition Refinement} A partition refinement \cite{refinement} incrementally partitions a family of sets $V$ into disjoint sets which collectively are $\lambda$. When adding $v$ to $\lambda$, we check $\forall \rho \in \lambda$ which $\abs{\rho \cap v} > 0$ and split those $\rho$ into $\rho_{v}$ and $\rho_{v^\prime}$. In order to remove $v$ from $\lambda$ it's normal to keep a union-find data structure \cite{unionFind}, however, we instead incrementally store $\lambda, \lambda \vdash M, M \in \set{(U_i ; 1 \leq i \leq j) ; 1 < j < \abs{U}}$ which collectively form $\Lambda$. This approach scales well because of some domain specific simplifications we can make to $\Lambda$. The following characterise our partitioning.

\begin{lemma}
	Given $\rho = I = v_a \bigcap v_b \bigcap ... \bigcap v_w$, $\rho$ can be simplified to $v_{[I_{maxl}, I_{minr}]}$, where $I_{maxl}$ is the rightmost left-file of any $u$ in $I$ and $I_{minr}$ the leftmost right-file of the same.
\end{lemma}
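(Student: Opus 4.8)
The plan is to reduce the lemma to an elementary fact about nested intervals, via the explicit description of $v_u$ recorded in the Overview. First I would fix notation: for an interval of files $u = [u_l, u_r]$, the pawn-square set $v_u$ is precisely the maximal set $\{\, s \in \mathcal{S} : \mu_s \subseteq u \,\}$, where $\mu_s$ is the origin-file set of $s$ (the $\mu$ of the Naive Approach). Concretely, a pawn on a square of file $f$ and rank $r \geq 2$ could have started on file $f$ or stepped diagonally at most $r-2$ times in one direction, so $\mu_s = [\,f-(r-2),\, f+(r-2)\,] \cap [1,n]$. I would record up front the only two features of $\mu_s$ that matter here: it is itself a (board-clamped) interval of files, and it is never empty, since it always contains $f$.

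Next I would dispatch the interval arithmetic. Writing the index intervals of the $v$'s being intersected as $a = [a_l,a_r], b = [b_l,b_r], \dots, w = [w_l,w_r]$, the ``rightmost left-file'' and ``leftmost right-file'' of the statement are $I_{maxl} = \max\{a_l, b_l, \dots, w_l\}$ and $I_{minr} = \min\{a_r, b_r, \dots, w_r\}$, and the standard identity for intersecting intervals gives $a \cap b \cap \cdots \cap w = [\,I_{maxl},\, I_{minr}\,]$ as sets of files (reading the right-hand side as $\emptyset$ when $I_{maxl} > I_{minr}$). This is the one step where the two extremal files actually enter.

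Then I would finish with a short chain of set equalities:
\[
\begin{aligned}
v_a \cap \cdots \cap v_w
 &= \{\, s : \mu_s \subseteq x \text{ for every } x \in \{a,\dots,w\} \,\} \\
 &= \{\, s : \mu_s \subseteq a \cap \cdots \cap w \,\}
  = \{\, s : \mu_s \subseteq [\,I_{maxl},\,I_{minr}\,] \,\}
  = v_{[\,I_{maxl},\,I_{minr}\,]},
\end{aligned}
\]
the middle equality being the triviality that a set is contained in a finite intersection iff it is contained in each member (equivalently: $v$ is antitone in $u$ and $v_u$ is cut out by a condition quantified over the elements of $\mu_s$, so it commutes with intersections of $u$'s). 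For the degenerate case $I_{maxl} > I_{minr}$, nonemptiness of every $\mu_s$ forces $\{\, s : \mu_s \subseteq \emptyset \,\} = \emptyset$, so both sides vanish and the simplification still holds with $v_{[\,I_{maxl},\,I_{minr}\,]} = v_\emptyset = \emptyset$; in the partition-refinement use of the lemma this case does not occur, since $\rho$ is by construction a nonempty part of $\lambda$.

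The main obstacle is not mathematical depth but bookkeeping: pinning down the exact formula for $\mu_s$ together with its clamping to $[1,n]$ (harmless, since every $u$ already lies in $[1,n]$), and verifying that $[\,I_{maxl},\,I_{minr}\,]$ is a legitimate argument of $v$, so that replacing the whole family $v_a,\dots,v_w$ by the single interval genuinely loses no information. Once the description $v_u = \{\, s : \mu_s \subseteq u \,\}$ is in hand, everything else is immediate.
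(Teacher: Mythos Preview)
Your argument is correct. The definition $v_u = \{\, s : \mu_s \subseteq u \,\}$ is exactly what the Overview records, and from there the chain $\bigcap_i v_{u_i} = \{\, s : \mu_s \subseteq \bigcap_i u_i \,\} = v_{\bigcap_i u_i}$ together with the interval identity $\bigcap_i [l_i,r_i] = [\max_i l_i, \min_i r_i]$ is airtight. The handling of the degenerate case via nonemptiness of $\mu_s$ is a nice touch, even if (as you note) it does not arise in the partition-refinement application.

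By way of comparison: the paper does not actually prove this lemma. It offers a single sentence of geometric intuition, observing that the $v_u$ are congruent, similarly oriented triangles resting on the second rank, and that the intersection of such triangles is again such a triangle. That picture is morally the same content as your argument---the triangle shape is precisely the locus $\{\,(f,r) : [f-(r-2), f+(r-2)] \subseteq u\,\}$, and ``smaller congruent triangle over the overlap of the bases'' is the geometric reading of $v_{[\,I_{maxl},\,I_{minr}\,]}$---but you have replaced the picture with a formal derivation from the defining property of $v_u$. What your route buys is that it never needs the explicit $\mu_s$ formula or the triangular shape at all: only that $v_u$ is cut out by the monotone condition $\mu_s \subseteq u$, which immediately makes $u \mapsto v_u$ send intersections to intersections. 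What the paper's picture buys is brevity and a mnemonic that also motivates the next two lemmas about how a new $v$ slices an existing part.
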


An intuition for this is, given that a number of similarly orientated congruent triangles are placed on top of a horizon intersect, the intersection is another smaller congruent triangle.

\begin{lemma} \label{onlyIntersection}
	 Given $\rho = I$, if $\abs{v \bigcap \rho} > 0$, then $\abs{v \bigcap \rho} = v_{[u_l, min(I_{minr}, u_r)]}$.
\end{lemma}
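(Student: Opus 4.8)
The plan is to treat $v \cap \rho$ as one further instance of the ``intersection of congruent triangles'' already handled by the previous lemma, and then to exploit the fact that in our enumeration a freshly appended file-range never has a left-file below the ones already recorded inside $\rho$. Writing $v = v_u$ with $u = [u_l, u_r]$ and $\rho = I = v_{u^{(1)}} \cap \cdots \cap v_{u^{(k)}}$, we have $v \cap \rho = v_u \cap v_{u^{(1)}} \cap \cdots \cap v_{u^{(k)}}$, which is itself an intersection of triangles; the remaining work is to evaluate and simplify it.

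First I would apply the previous lemma (that such an intersection $I$ collapses to the single triangle $v_{[I_{maxl}, I_{minr}]}$) to the $(k+1)$-fold intersection, obtaining $v_{[L, R]}$ with $L = \max(u_l, I_{maxl})$ and $R = \min(u_r, I_{minr})$. A self-contained check is just as quick: a square $s$ at rank $r$ lies in $v_{[a, b]}$ precisely when $a + (r-2) \le \mathrm{file}(s) \le b - (r-2)$ (this is $\mu_s \subseteq [a, b]$ unwound rank by rank), and intersecting several such two-sided bounds again gives a bound of the same shape with the left endpoints maximised and the right endpoints minimised.

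Next I would show $u_l \ge I_{maxl}$, which collapses $L$ to $u_l$. This follows from the construction of $\Lambda$ in Section~\ref{enumerationSection}: every refined $\lambda$ is a partition of a prefix $(U_1, \ldots, U_j)$ of the lexicographically increasing sequence $U$, so each $u^{(i)}$ occurring in $\rho$ is one of $U_1, \ldots, U_j$, whereas $v = v_u$ is the just-appended $U_{j+1}$. Because the order is on $(u_0, u_1) = (u_l, u_r)$, each $U_i <_{lex} U_{j+1}$ forces $(U_i)_l \le (U_{j+1})_l$, hence $I_{maxl} = \max_i (U_i)_l \le u_l$. Therefore $v \cap \rho = v_{[u_l,\, \min(u_r, I_{minr})]}$, and the hypothesis $\abs{v \cap \rho} > 0$ is exactly what makes $u_l \le \min(u_r, I_{minr})$, so that this triangle is non-degenerate.

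The main obstacle is stating the ordering invariant cleanly rather than doing the algebra: one needs to be sure $\lambda$ is always a partition of a \emph{sorted} prefix of $U$ (so that every file-range inside $\rho$ genuinely precedes $u$), and that the lemma is claimed only for the all-positive cell $\rho = I$ — the cells that carry set-complements from earlier refinement splits are out of scope and need not reduce to a single triangle. A lesser secondary point is the clipping of $\mu_s$ near files $1$ and $n$, which is precisely what the $edge$/$non\_edge$ split absorbs and which leaves the interior rank-by-rank argument untouched.
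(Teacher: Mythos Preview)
Your proposal is correct and follows essentially the same route as the paper: reduce $v \cap \rho$ via the previous lemma to $v_{[\max(u_l, I_{maxl}),\,\min(u_r, I_{minr})]}$ and then collapse the left endpoint using $I_{maxl} \le u_l$. The paper's own justification is the single line ``$I_{maxl} \le u_l$ but $u_r$ can of course be to the left of $I_{minr}$,'' so your version simply makes explicit what the paper asserts --- in particular, your derivation of $I_{maxl} \le u_l$ from the lexicographic prefix invariant on $U$ is the intended reason behind that inequality.
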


$I_{maxl} \leq u_l$ but $u_r$ can of course be to the left of $I_{minr}$.

\begin{lemma}
	Given $\rho = I \setminus D$, $D = v_\alpha \bigcup v_\beta ... \bigcup v_\omega$, if $\abs{v \bigcap \rho} > 0$, then $\abs{v \bigcap \rho} = v_{[u_l, min(I_{minr}, u_r)]} \setminus v_{[u_l, D_r]}$ where $D_r$ is the file of the rightmost square $s \in D$. 
\end{lemma}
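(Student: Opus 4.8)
The plan is to reduce the claim to the triangular geometry of pawn-squares and then strip off $I$ and $D$ in turn with the two preceding lemmas. Recall that $v_{[a,b]}$ is the set of squares $(f,r)$ with $a + (r-2) \le f \le b - (r-2)$: an upward triangle whose rank-$2$ base is the file-interval $[a,b]$ and which narrows by one file on each side per rank up. Since $v \bigcap (I \setminus D) = (v \bigcap I) \setminus D$, I would split the proof into (i) identifying $v \bigcap I$ and (ii) describing what removing $D$ does to that set.

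For (i): every $v$ occurring in $I$ was appended to $U$ before the current $u$, and $U$ is enumerated in lexicographic order of $(u_l,u_r)$, so each such $v$ has left file at most $u_l$, whence $I_{maxl} \le u_l$. Since $\abs{v \bigcap \rho} > 0$ forces $\abs{v \bigcap I} > 0$, Lemma~\ref{onlyIntersection} applies and yields $v \bigcap I = v_{[u_l,\min(I_{minr}, u_r)]}$; write $T$ for this triangle, noting that at rank $r$ its leftmost square sits at file $u_l + (r-2)$.

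Step (ii) is where the real work is. For any part $v_{u_j}$ among $v_\alpha,\dots,v_\omega$ making up $D$ we again have $u_{j,l} \le u_l$ by the enumeration order, so every square $(f,r)$ of $T$ already satisfies $f \ge u_l + (r-2) \ge u_{j,l} + (r-2)$; hence \emph{inside $T$} the left constraint of $v_{u_j}$ is automatic and lying in $v_{u_j}$ is equivalent to $f \le u_{j,r} - (r-2)$, i.e.\ to lying weakly left of the right edge of $v_{u_j}$. Consequently, restricted to $T$, the union $D$ is governed solely by the largest right-edge parameter $\max_j u_{j,r}$, so $D \bigcap T = v_{[u_l,\max_j u_{j,r}]} \bigcap T$; and since the rightmost square of $D$ is reached at rank $2$ by whichever part has the largest $u_{j,r}$ (each part's rank-$2$ base reaches exactly file $u_{j,r}$, with $u_{j,r} \le n$ so no board-edge clamping occurs there, and every higher rank is strictly narrower), we get $D_r = \max_j u_{j,r}$ and hence $D \bigcap T = v_{[u_l,D_r]} \bigcap T$. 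Combining the pieces gives
\[
v \bigcap \rho = T \setminus D = T \setminus (D \bigcap T) = T \setminus v_{[u_l,D_r]} = v_{[u_l,\min(I_{minr}, u_r)]} \setminus v_{[u_l,D_r]},
\]
which is the asserted formula; the degenerate case $D_r \ge \min(I_{minr}, u_r)$ makes both sides empty and is already excluded by the hypothesis $\abs{v \bigcap \rho} > 0$.

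The step I expect to be the main obstacle is (ii): one must be certain that the left edge of \emph{every} component of $D$ is swallowed by the left edge of $T$ --- this is exactly the place the enumeration-order inequality $u_{j,l} \le u_l$ is indispensable --- and one must verify that board-edge clamping never disturbs the identification $D_r = \max_j u_{j,r}$, which holds because every $u_{j,r} \le n$ so the rank-$2$ base of each part is uncut on the right.
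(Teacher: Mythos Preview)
Your proof is correct and follows essentially the same approach as the paper: reduce $v\cap(I\setminus D)$ to $(v\cap I)\setminus D$, invoke the preceding lemma for $v\cap I$, and then use the lexicographic enumeration order to guarantee $u_{j,l}\le u_l$ for every component of $D$, so that inside $T$ only the rightmost right-file $D_r$ matters. The paper's justification is a two-sentence sketch that singles out the one component of $D$ attaining $D_r$ and observes it already covers $v_{[u_l,D_r]}$; you instead treat all components uniformly and take the max, which is slightly more work but arrives at the same place.
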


$v_{[u_l, D_r]}$ can't be intersected by $v$ because whatever elements contributed the $D_r$ starting-square to $D$ have left-files to the left of $u_l$ and therefore cover $v$. Therefore the intersectable squares are those of $I$, given in lemma \ref{onlyIntersection}, less the excluded $v_{[u_l, D_r]}$ region.

\begin{theorem} \label{beforeu}
	Given $\rho \in \lambda$ are enumerated in the order they're added, $\abs{v \bigcap \rho} = 0$ and $I_{minr} < u_l$, then $\rho$ won't intersect any subsequent $\nu$.
\end{theorem}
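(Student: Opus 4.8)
\medskip
\noindent\textbf{Proof proposal.}
The plan is to turn the statement into a one–line comparison of file indices, using only the definition of $v$ and the lemmas already proved about $\rho=I$. Recall that a square $s$ at file $f$ and rank $r\ge 2$ has $\mu_s=[\max(1,f-(r-2)),\ \min(n,f+(r-2))]$. Since $\min(n,f+(r-2))\ge f$, membership $s\in v_{[a,b]}$ always forces $f\le b$, so
\[
s\in v_{[a,b]}\ \Longrightarrow\ \text{file}(s)\le b,
\]
with no hypothesis on $a,b,n$ (right-edge truncation only shrinks $v$). Symmetrically, if $a\ge 2$ then $\mu_s\subseteq[a,b]$ cannot use the branch $\max(1,\cdot)=1$, hence $f-(r-2)\ge a$ and
\[
a\ge 2,\ s\in v_{[a,b]}\ \Longrightarrow\ \text{file}(s)\ge a.
\]
These two implications are the only geometry needed; everything else is bookkeeping.

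Next I would record two order facts coming from ``$\mathcal U$ is enumerated by $(u_0,u_1)$'' together with the fact that $\lambda$ is built by adding the $v$'s in that order. (i) Every $\nu$ added after the current $v=v_u$ equals $v_{u'}$ for some $u'$ with $u'_l\ge u_l$. (ii) Every interval contributing to the intersection part $I$ of $\rho$ was added before $v$; moreover $\rho\subseteq I$, and by the lemma that $\rho=I$ simplifies to $v_{[I_{maxl},I_{minr}]}$, choosing a contributing interval $w$ with $w_r=I_{minr}$ gives $\rho\subseteq I\subseteq v_w$. Then I would assemble: the hypothesis $I_{minr}<u_l$ forces $u_l\ge 2$ (file indices are $\ge 1$), so by (i) every later $\nu=v_{u'}$ has $u'_l\ge u_l\ge 2$ and the lower implication applies. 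Every square of $\rho$ lies in $v_w$ with $w_r=I_{minr}$, hence has file $\le I_{minr}$; every square of a later $\nu=v_{u'}$ has file $\ge u'_l\ge u_l>I_{minr}$. Therefore $\rho\cap\nu=\emptyset$, which is the claim. (Lemma \ref{onlyIntersection} applied to $v$ itself says that if $\abs{v\cap\rho}>0$ then $v\cap\rho=v_{[u_l,\min(I_{minr},u_r)]}$, which is empty once $I_{minr}<u_l$, so the two hypotheses of the theorem are mutually consistent and only $I_{minr}<u_l$ is used.)

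The one delicate point — what I would call the main obstacle — is the left edge of the board. An ``edge'' interval $[1,b]$ has $v_{[1,b]}$ reflected against file $1$ rather than a clean congruent triangle, so the lower implication genuinely fails for $a=1$: one cannot simply say ``$v_{[a,b]}$ sits between files $a$ and $b$'' in general. The argument sidesteps this because $I_{minr}<u_l$ already certifies $u_l\ge 2$, and by order fact (i) all later intervals $u'$ inherit $u'_l\ge 2$, which is exactly the regime where the lower implication holds; meanwhile the upper implication, which is all that is used for $\rho$ itself, needs no edge hypothesis at all. So no case split on $n$, and no case split on the edge/non-edge status of the intervals inside $I$, is required, and the proof never leaves the level of the two displayed implications.
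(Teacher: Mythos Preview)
Your argument is correct and follows the same skeleton as the paper's proof: use the lexicographic order on $\mathcal{U}$ to get $u_l\le \mu_l$ for every later interval, chain this with the hypothesis $I_{minr}<u_l$ to obtain $I_{minr}<\mu_l$, and conclude $\rho\cap\nu=\emptyset$. The paper states exactly these three steps in as many sentences and leaves the last implication (``$I_{minr}<\mu_l$ forces empty intersection'') to the reader, whereas you supply it explicitly via the two file--index bounds $\text{file}(s)\le b$ for $s\in v_{[a,b]}$ and $\text{file}(s)\ge a$ when $a\ge 2$; your discussion of why the left--edge case $a=1$ never arises (because $I_{minr}\ge 1$ already forces $u_l\ge 2$) is extra care the paper omits, but the underlying route is the same.
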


\begin{proof}
	$\abs{v \bigcap \rho} = 0$ because $I_{minr} < u_l$. Given $\mu$ is the corresponding range of starting files of the subsequent $\nu$, then $u_l \leq \mu_l$, $I_{minr} < \mu_l$, and therefore $\abs{\nu \bigcap \rho} = 0$.
\end{proof}

As no subsequent $\nu$ can intersect $\rho$ we don't add it to $\lambda$.

\begin{theorem} \label{subsequentu}
	Given $\abs{v \bigcap \rho} > 0$, if $u_r <$ the file of the rightmost square in $\rho$, $\rho_r$, then $\rho_{v^\prime}$ is intersectable by a subsequent $\nu$. Conversely, if $\rho_r \leq u_r$, then $\rho_{v^\prime}$ isn't subsequently intersectable.
\end{theorem}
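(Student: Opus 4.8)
The plan is to change variables to \emph{diagonal coordinates}. For a pawn-square on rank $r$ with file $f$ set $x=f-(r-2)$ and $y=f+(r-2)$, so that $f=(x+y)/2$, $r\ge 2\iff y\ge x$, and — the point of the substitution — every region $v_{[a,b]}$ becomes the ``corner'' $\{x\ge a,\ y\le b\}$. Then $I=v_{[I_{maxl},I_{minr}]}$ is the corner $\{x\ge I_{maxl},\ y\le I_{minr}\}$, $v=v_u$ is $\{x\ge u_l,\ y\le u_r\}$, and by the lemmas characterising the parts (in particular Lemma~\ref{onlyIntersection} and the one giving $\rho=I\setminus D$) each $\rho$ is a corner minus a finite union of corners $\{x\ge\gamma_l,\ y\le\gamma_r\}$. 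The single structural fact I would lean on is that, because $\lambda$ is built by inserting the $v_u$ in increasing lexicographic order of $u$, every index $\gamma$ appearing in $I$ or in $D$ satisfies $\gamma<_{lex}u$; combined with the elementary observation that $\gamma<_{lex}u$ together with $\gamma_r>u_r$ forces $\gamma_l<u_l$, this is enough. Note $\rho_r=\max_{(x,y)\in\rho}\tfrac{x+y}{2}$, $\rho_{v^\prime}=\rho\setminus v=\rho\cap(\{x<u_l\}\cup\{y>u_r\})$, and every ``subsequent $\nu$'' is a $v_\mu$ with $u_l\le\mu_l$.

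For the second claim (``not subsequently intersectable''), assume $\rho_r\le u_r$, take any $\mu$ with $\mu_l\ge u_l$, and show $v_\mu\cap\rho\subseteq v$, which gives $v_\mu\cap\rho_{v^\prime}=\emptyset$. Let $(x,y)\in v_\mu\cap\rho$. From $(x,y)\in v_\mu$ we get $x\ge\mu_l\ge u_l$, which is the left inequality defining $v$. For the right inequality $y\le u_r$, suppose instead $y>u_r$. The rank-$2$ projection $(y,y)$ then lies in $I$ (here $y\ge x\ge I_{maxl}$ and $y\le I_{minr}$) but has file $y>u_r\ge\rho_r$, so $(y,y)\notin\rho$; hence $(y,y)\in D$, say $(y,y)\in\{x\ge\gamma_l,\ y\le\gamma_r\}$ with $\gamma<_{lex}u$. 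Then $\gamma_r\ge y>u_r$, so $\gamma_l<u_l$, and therefore $x\ge u_l>\gamma_l$ and $y\le\gamma_r$ place $(x,y)$ itself in that corner, i.e.\ in $D$ — contradicting $(x,y)\in\rho=I\setminus D$.

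For the first claim (``intersectable''), assume $u_r<\rho_r$ and pick $s_0=(x_0,y_0)\in\rho$ of maximal file, so $x_0+y_0=2\rho_r>2u_r$; since $x_0\le y_0$ this yields $y_0>u_r$, hence $s_0\notin v$ and $s_0\in\rho_{v^\prime}$. If $x_0\ge u_l$, take $\mu=(u_l,y_0)$: it is a legal range ($u_l\le u_r<\rho_r\le y_0\le I_{minr}$), it is lexicographically after $u$ (same left file, $\mu_r=y_0>u_r$), and it contains $s_0$, so $v_\mu\cap\rho_{v^\prime}\ne\emptyset$. If instead $x_0<u_l$, invoke the hypothesis $\abs{v\cap\rho}>0$: fix $t=(x_t,y_t)\in v\cap\rho$, so $x_t\ge u_l$ and $y_t\le u_r<y_0$, and form $q=(x_q,y_0)$ with $x_q\in\{x_t,x_t+1\}$ chosen with the parity of $y_0$ (still $u_l\le x_q\le y_0$, so $q$ is an on-board pawn-square, and $q\in I$). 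If $q\in D$, say $q\in\{x\ge\gamma_l,\ y\le\gamma_r\}$ with $\gamma<_{lex}u$, then $\gamma_r\ge y_0>u_r$ gives $\gamma_l<u_l\le x_t$ and also $\gamma_r\ge y_0>u_r\ge y_t$, so $t$ lies in that corner $\subseteq D$ — contradicting $t\in\rho$. Hence $q\in\rho\setminus v=\rho_{v^\prime}$, and $\mu=(u_l,y_0)$ (lengthened on the right, if the enumeration's length constraints require it) witnesses the intersection.

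The main obstacle is really just the one structural input: without the fact that the corners comprising $D$ are indexed by ranges strictly earlier than $u$, the ``not subsequently intersectable'' half is simply false — one can exhibit a corner-minus-corners whose $D$ reaches far to the right of $u_r$ yet leaves $\rho$ protruding past $v$ — so the honest work is checking that this lexicographic-order invariant genuinely holds for the parts the refinement keeps (which in turn uses Theorem~\ref{beforeu} and Theorem~\ref{subsequentu} itself inductively, via the discipline of never storing a part no later $\nu$ can touch). After that, the only remaining care is the bookkeeping in the intersectable direction: the parity adjustment when gluing $t$ and $s_0$, the check that the glued square stays on the board, and the short argument that it escapes $D$; everything else collapses to comparing two one-sided inequalities, which is exactly what the diagonal coordinates are there to make transparent.
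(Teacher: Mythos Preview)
The paper offers no proof beyond the sentence ``This is a purely geometric observation'' and a pointer to Fig.~\ref{subsequentlyNoIntersection}, so you have supplied an argument where the paper supplies only a picture. Your diagonal change of variables is well chosen: turning each $v_{[a,b]}$ into the corner $\{x\ge a,\ y\le b\}$ collapses the triangular geometry to monotone half-plane constraints, after which both directions become clean order arguments. The structural input you isolate --- that every index $\gamma$ occurring in $I$ or in $D$ satisfies $\gamma<_{lex}u$ --- does hold, and it follows directly from the fact that the refinement inserts the $v_u$ in lexicographic order; you do not actually need to invoke Theorem~\ref{subsequentu} inductively for this, since discarding a part never alters the $I\setminus D$ representation of the parts that remain. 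One place worth tightening is the ``intersectable'' direction: your witness $\mu=(u_l,y_0)$ is lexicographically after $u$, but the paper's enumeration carries side constraints (length bounds, the edge/non-edge split) that your $\mu$ need not satisfy, so either argue that some admissible later $\mu$ still meets $\rho_{v'}$, or simply note that operationally only the ``not intersectable'' half governs correctness of the refinement (it is the half that licenses discarding $\rho_{v'}$), and that half you have established in full.
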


This is a purely geometric observation. See ``Fig.~\ref{subsequentlyNoIntersection}''. As a result we don't add any subsequently non-intersectable $\rho_{v^\prime}$ to $\lambda$.

\begin{figure}[H]
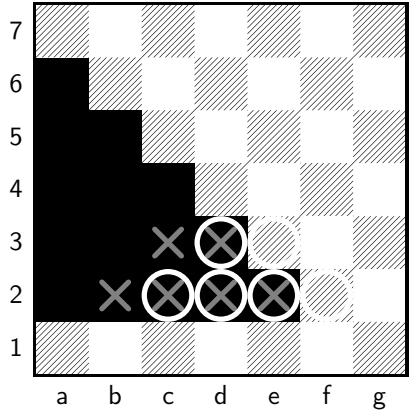

	\begin{center}
		\chessboard[maxfield=g7,
		pgfstyle=topborder,
		pgfstyle=color,
		color=black,
		colorbackfields={
			a2, a3, a4, a5, a6,
			b2, b3, b4, b5,
			c2, c3, c4,
			d2, d3,
			e2
		},
		padding=-0.45em,
		color=gray,
		pgfstyle=cross,
		shortenstart=0.5ex,
		shortenend=0.5ex,
		markfields={
			b2,
			c2, c3,
			d2, d3,
			e2
		},
		color=white,
		pgfstyle=circle,
		padding=-0.1em,
		markfields={
			c2,
			d2, d3,
			e2, e3,
			f2
		},
		shortenend=0.5ex,
		margintopwidth=0pt,
		showmover=false] 
	\end{center}
	\caption{$v_1 = v_{[a, e]}$ is filled with a black background. $\rho = v_1$. $V_1 = (v_1, v_2, v_3)$. $v_2 = v_{[b, e]}$ is marked with gray $X$s and $v_3 = v_{[c, f]}$ is marked by white circles. As $u_{2_r} = \rho_r =$ ``e'', the squares in $\rho$ that $v_3$ would have intersected had it been added before $v_2$, namely $c2$, $d2$, $d3$ and $e3$, belong to $\rho_v$. $\rho_{v^\prime}$ is not subsequently intersectable.}
	\label{subsequentlyNoIntersection}
\end{figure}

\subsubsection{Enumerating $\lambda$ Solutions} Having partitioned $V$, we enumerate every solution $P \in \mathcal{P_\lambda}$. Consider the following.

\begin{lemma}
	Let $v = V_{-1}$. Every solution to $V$ must also satisfy $V \setminus v$.
\end{lemma}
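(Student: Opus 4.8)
The plan is to peel back the two pieces of terminology in the statement — what it means to be a \emph{solution to} $V$ and what it means to \emph{satisfy} $V \setminus v$ — and then observe that the constraints defining the second are literally a sub-collection of the constraints defining the first, so that beyond one bookkeeping remark about which squares a pawn may occupy there is essentially nothing to prove. Write $U = (U_1, \dots, U_k)$ so that $v = V_{-1} = v_{U_k}$ and $V \setminus v = (v_{U_1}, \dots, v_{U_{k-1}})$. Recalling the discussion following Theorem~\ref{diagramReachability}, a solution to $V$ is a diagram $P \subseteq \bigcup V$ with $\abs{P \cap v_{U_i}} > \abs{U_i}$ for all $i \in \{1, \dots, k\}$, and a solution to $V \setminus v$ is a diagram $P' \subseteq \bigcup(V \setminus v)$ with $\abs{P' \cap v_{U_i}} > \abs{U_i}$ for all $i \in \{1, \dots, k-1\}$.

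First I would fix an arbitrary solution $P$ to $V$ and set $P' := P \cap \bigcup(V \setminus v)$, so that $P' \subseteq \bigcup(V \setminus v)$ holds by construction. The next step is to check that every constraint of $V \setminus v$ is still met: for each $i \le k-1$ we have $v_{U_i} \subseteq \bigcup(V \setminus v)$, hence $P' \cap v_{U_i} = P \cap v_{U_i}$, and therefore $\abs{P' \cap v_{U_i}} = \abs{P \cap v_{U_i}} > \abs{U_i}$. That is exactly the assertion: after discarding the pawns of a $V$-solution that sit outside $\bigcup(V \setminus v)$, what remains is a solution to $V \setminus v$.

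The only thing that requires any care — and the closest thing to an obstacle here — is the mismatch between the ambient square sets $\bigcup V$ and $\bigcup(V \setminus v)$: since $v = v_{U_k}$ is lexicographically last it typically contains squares belonging to no earlier $v_{U_i}$, and a legitimate $V$-solution may place pawns there, so one must read the lemma as ``the restriction of any $V$-solution to $\bigcup(V \setminus v)$ satisfies $V \setminus v$'' rather than ``$P$ is verbatim a $(V \setminus v)$-solution''. Once that is said the argument is immediate, because dropping those extra pawns cannot decrease any of the intersection counts on which the surviving ``$>$'' constraints depend. I would close by flagging why this is the right first step toward enumerating $\mathcal{P}_\lambda$: it shows the projection from $V$-solutions to $(V \setminus v)$-solutions is well defined, so one may enumerate the solutions of $V \setminus v$ and then, for each, enumerate the ways of adding pawns inside $v$ so as to also secure $\abs{P \cap v} > \abs{U_k}$ — and since adding pawns can only help the earlier constraints, no already-satisfied constraint is ever destroyed in that extension step.
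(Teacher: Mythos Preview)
Your argument is correct. The paper does not actually prove this lemma at all: it is stated bare and immediately followed by the remark ``As a result we can produce every solution to $V$ from solutions to $V \setminus v$'', so the authors evidently regard it as self-evident. Your unpacking --- that the constraints for $V \setminus v$ are literally a subset of the constraints for $V$, modulo restricting the pawn placement to $\bigcup(V \setminus v)$ --- is exactly the intended (but unwritten) reasoning, and your care about the ambient square-set mismatch is a fair point that the paper glosses over.
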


As a result we can produce every solution to $V$ from solutions to $V \setminus v$.

\begin{lemma}
	 Let $P$ be a solution to $V$ and $P^\prime$ be a solution to $V \setminus v$. Let $l \vdash V \setminus v$, and $l_v = \set{\rho; \rho \in l \land \abs{\rho \bigcap v} > 0}$. If $\forall \rho \in l_v, P^\prime_\rho = P_{\rho_v} + P_{\rho_{v^\prime}}$ then we say $P$ is produced from $P^\prime$.
\end{lemma}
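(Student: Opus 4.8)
As literally phrased the statement is a definition; the content to establish---and the justification of the preceding remark that ``we can produce every solution to $V$ from solutions to $V \setminus v$''---is that every solution $P$ to $V$ is produced from some solution $P^\prime$ to $V \setminus v$, and that this $P^\prime$ is forced. Here a \emph{solution} to a family assigns to each part of its refinement a pawn count between $0$ and the part's size so that every $v_u$ in the family holds more than $\abs{u}$ pawns (the unreachability criterion behind Theorem~\ref{diagramReachability}). The plan is to build $P^\prime$ explicitly, verify it is a legitimate solution to $V \setminus v$, and read off uniqueness from the defining equation.

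First I would pin down how $\lambda \vdash V$ sits over $l \vdash V \setminus v$. Adjoining $v = V_{-1}$ to the refinement touches only parts that meet $v$: each $\rho \in l_v$ is replaced in $\lambda$ by $\rho_v = \rho \cap v$ and $\rho_{v^\prime} = \rho \setminus v$ (allowing one of them to be empty), every $\rho \in l \setminus l_v$ is inherited unchanged, and $\lambda$ carries in addition the parts lying in $v \setminus \bigcup(V \setminus v)$, whose squares belong to $u_{-1}$ alone. The property that makes the argument go through is that every part---in $l$ or in $\lambda$---is contained in a single fixed subset of $U$, so that for each $u$ a part is either $\subseteq v_u$ or disjoint from $v_u$; pawn counts are therefore additive over the parts inside any $v_u$.

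Now define $P^\prime_\rho := P_\rho$ for $\rho \in l \setminus l_v$ and $P^\prime_\rho := P_{\rho_v} + P_{\rho_{v^\prime}}$ for $\rho \in l_v$, i.e.\ the merge required by the lemma. Two things remain. First, $P^\prime$ is a legal count assignment: for $\rho \in l_v$ we have $\abs{\rho} = \abs{\rho_v} + \abs{\rho_{v^\prime}}$ and $0 \le P_{\rho_v} \le \abs{\rho_v}$, $0 \le P_{\rho_{v^\prime}} \le \abs{\rho_{v^\prime}}$, whence $0 \le P^\prime_\rho \le \abs{\rho}$; on the remaining parts it is immediate. Second, $P^\prime$ meets every constraint of $V \setminus v$: fix a $v_u$ in $V \setminus v$; its squares are the disjoint union of the $\rho \in l$ with $\rho \subseteq v_u$, and for each such $\rho$ either $\rho \in l \setminus l_v$, so $\rho \in \lambda$ and $P^\prime_\rho = P_\rho$, or $\rho \in l_v$, so $\rho_v, \rho_{v^\prime} \subseteq \rho \subseteq v_u$ and $P^\prime_\rho = P_{\rho_v} + P_{\rho_{v^\prime}}$ counts exactly the pawns $P$ puts inside $\rho$. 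Summing over these $\rho$, the number of pawns $P^\prime$ places in $v_u$ equals the number $P$ places in $v_u$, which exceeds $\abs{u}$ because that constraint is also one of the constraints of $V$ that $P$ satisfies. Hence $P^\prime$ is a solution to $V \setminus v$ and $P$ is produced from it; uniqueness is clear since the merge equation, together with $P^\prime = P$ on $l \setminus l_v$, determines $P^\prime$. Running the reduction backwards, the $P$ produced from a fixed $P^\prime$ are obtained by choosing for each $\rho \in l_v$ a split $P_{\rho_v} + P_{\rho_{v^\prime}} = P^\prime_\rho$ within the size bounds and assigning counts to the new parts inside $v$, after which only the lone constraint attached to $v$ itself must be tested---precisely the work the enumeration does.

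The main obstacle is bookkeeping, not inequalities. The $\lambda$ actually maintained drops certain parts by Theorems~\ref{beforeu} and~\ref{subsequentu}---the $\rho$ with $I_{minr} < u_l$ and the non-subsequently-intersectable $\rho_{v^\prime}$---so I would need to check that each omitted region contributes a fixed, precomputable pawn count that can be folded into a constant offset in the $v_u$-sums, leaving the additivity step untouched; the degenerate cases $\rho_v = \emptyset$ or $\rho_{v^\prime} = \emptyset$ and the accounting for the fresh parts in $v \setminus \bigcup(V \setminus v)$ likewise need a line or two, but fall out once the clean split above is recorded.
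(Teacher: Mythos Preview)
The paper does not prove this lemma: it is stated as a definition, and the sentence that follows merely describes the algorithmic procedure for producing $P$ from $P^\prime$ (split each $P^\prime_\rho$ for $\rho \in l_v$, copy the untouched parts, fill the orphan $\alpha$), with no argument that every solution to $V$ arises this way. You correctly identify the statement as definitional and then supply exactly the missing content---the surjectivity and uniqueness of the ``produced from'' relation---by merging $P$ back to $P^\prime$ and checking that each constraint of $V\setminus v$ is preserved because the pawn total inside any $v_u$ with $u \neq u_{-1}$ is unchanged under the merge. Your argument is sound, and your final paragraph on the bookkeeping for parts dropped by Theorems~\ref{beforeu} and~\ref{subsequentu} goes beyond anything the paper addresses. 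In short, the paper treats the lemma as a naming convention and leaves the justification implicit in the enumeration procedure; you make that justification explicit.
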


We produce $P$ from $P^\prime$ by \begin{inparaenum} \item $\forall \rho \in l_v$, splitting pawns in $P^\prime_\rho$ into $P_{\rho_v}$ and $P_{\rho_{v^\prime}}$ \item $\forall \rho, \abs{\rho \bigcap v} = 0$ which remain subsequently intersectable, copying over $P^\prime_\rho$ to $P_\rho$ \item for an orphan partition $\alpha = v \setminus \bigcup (V \setminus v), \alpha \not = \{\emptyset\}$, placing up to whatever number of pawns haven't already been placed in $P$ in $\alpha$ \end{inparaenum} all s.t$.$ $V$ is satisfied. Our general approach to splitting $P^\prime_\rho$ and placing pawns in $\alpha$ is recursive.

\begin{theorem}
	Let $P$ be a solution for $\Lambda_V$ produced from $P^\prime$ which is a solution for $\Lambda_{V \setminus v}$. Let $C(X)$ be the number of unreachable diagrams which a solution $X$ represents. Let $l = \set{\rho; \rho \in \Lambda_{V \setminus v} \land \abs{\rho \bigcap v} > 0}$. Let $\alpha$ be an orphan part in $\Lambda_V$. Finally, let $N(\rho) = \binom{\abs{\rho_v}}{P_{\rho_v}} \binom{\abs{\rho_{v^\prime}}}{P_{\rho_{v^\prime}}}$ and $D(\rho) = \binom{\abs{\rho}}{P^\prime_\rho}$. Then $$C(P) = C(P^\prime) \binom{\abs{\alpha}}{P_\alpha} \prod_{\forall \rho \in l} \frac{N(\rho)}{D(\rho)}$$
\end{theorem}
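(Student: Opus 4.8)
The plan is to reduce the identity to a single combinatorial fact about solutions and then to a routine bookkeeping on the parts of the two partitions.

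\medskip
\noindent\emph{Step 1: a master formula for $C$.} First I would establish that for \emph{any} solution $X$ of a partition $\Lambda_X$ one has $C(X) = \prod_{\rho \in \Lambda_X} \binom{\abs{\rho}}{X_\rho}$. This follows from the defining property of $\lambda$: every square of a part $\rho$ lies in exactly the same subset of $U$, so the number of pawns that a diagram places inside any $v_u$ is determined by the part-counts $X_\rho$ alone. Hence a diagram counted by $X$ is obtained by independently choosing, for each part $\rho$, which $X_\rho$ of its $\abs{\rho}$ squares carry a pawn; distinct choices give distinct diagrams, each such choice is still unreachable (its overlap counts are the ones fixed by $X$, so the matching condition of Theorem~\ref{diagramReachability} still fails), and every unreachable diagram with the part-profile $X$ arises this way. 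The disjoint union over choices gives the product of binomial coefficients; one also checks the formula is insensitive to the pruning of Theorems~\ref{beforeu} and~\ref{subsequentu}, since a pruned square either lies outside every relevant $v_u$ or is absorbed into a surviving part.

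\medskip
\noindent\emph{Step 2: relate $\Lambda_V$ to $\Lambda_{V\setminus v}$.} Next I would record, straight from the ``production'' of $P$ from $P'$, that $\Lambda_V$ is the disjoint union of: the parts $\rho \in \Lambda_{V\setminus v}$ with $\abs{\rho \bigcap v}=0$, carried over unchanged with $P_\rho = P'_\rho$; the pairs $\rho_v = \rho\bigcap v$, $\rho_{v'} = \rho\setminus v$ for each $\rho \in l$, with $\abs{\rho} = \abs{\rho_v}+\abs{\rho_{v'}}$ and $P'_\rho = P_{\rho_v}+P_{\rho_{v'}}$; and the orphan part $\alpha = v \setminus \bigcup(V\setminus v)$, absent from $\Lambda_{V\setminus v}$.

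\medskip
\noindent\emph{Step 3: substitute and cancel.} Applying the master formula to $P$ and to $P'$ and splitting each product along the decomposition of Step~2,
\[ C(P) = \Big(\prod_{\substack{\rho\in\Lambda_{V\setminus v}\\ \abs{\rho\bigcap v}=0}}\binom{\abs{\rho}}{P_\rho}\Big)\Big(\prod_{\rho\in l} N(\rho)\Big)\binom{\abs{\alpha}}{P_\alpha},\qquad C(P') = \Big(\prod_{\substack{\rho\in\Lambda_{V\setminus v}\\ \abs{\rho\bigcap v}=0}}\binom{\abs{\rho}}{P'_\rho}\Big)\Big(\prod_{\rho\in l} D(\rho)\Big). \]
Since $P_\rho = P'_\rho$ for every part disjoint from $v$, the two leading products coincide; dividing, they cancel and we are left with $C(P) = C(P')\,\binom{\abs{\alpha}}{P_\alpha}\prod_{\rho\in l} N(\rho)/D(\rho)$, which is the asserted identity.

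\medskip
\noindent The main obstacle is Step~1: making precise what ``the partition $\Lambda_X$'' and ``a solution'' mean once the subsequently-non-intersectable parts have been pruned and the incremental prefix structure of $\Lambda$ is taken into account, and then checking that the master formula genuinely does not see that pruning. Once the master formula is granted, Steps~2 and~3 are the straightforward cancellation above.
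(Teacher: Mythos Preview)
The paper does not actually prove this theorem; after stating it, the only follow-up is the one-line remark that $C(P_{\Lambda_V})$ can be computed recursively from $C(P_{\Lambda_{V\setminus v}})$. Your proposal therefore supplies strictly more than the paper does, and the route you take is the natural and correct one: once the master formula $C(X)=\prod_{\rho\in\Lambda_X}\binom{\abs{\rho}}{X_\rho}$ is in hand, your Step~2 is exactly the partition-refinement step the paper describes in the paragraphs preceding the theorem, and Step~3 is straight arithmetic.

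The only place that genuinely needs care, and you flag it yourself, is reconciling the master formula with the pruning of non-intersectable parts (Theorems~\ref{beforeu} and~\ref{subsequentu}). The point that resolves it is that a pruned part is never split again, so its binomial factor appears identically in $C(P)$ and in $C(P')$ and cancels in the ratio $C(P)/C(P')$; the recursive identity therefore does not see the pruning at all. Incidentally, your master formula gives $C(P_{\Lambda_\emptyset})=1$ (the empty product), not the $0$ the paper writes in its follow-up line; your value is the one that makes the recursion consistent.
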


We can count $C(P_{\Lambda_V})$ by continuously computing $C(P_{\Lambda_V})$ from $C(P_{\Lambda_{V \setminus v}})$, where the initial value $C(P_{\Lambda_{\emptyset}})$ is of course $0$.

\subsection{Counting All Unreachable Diagrams} We now consider how to generate all unreachable diagrams from the satisfiable subset of $U \in 2^\mathcal{U}_{edge} \bigcup 2^\mathcal{U}_{non\_edge}$.

\subsubsection{Disjoint Combinations of $U$} $\forall P_V$, extract the unique tuples $(e, w, \abs{p}, z, \abs{q})$ into $\mathcal{S}$ where $e \in \set{0, 1}$ indicates whether $U \in (2^\mathcal{U})_{edge}$, $w = (\bigcup U)_r - (\bigcup U)_l + 1$, $\abs{p} = \sum_{p \in P}p$, $z = \abs{U}$ and $\abs{q} = \abs{\bigcup_{v \in V} v}$. We enumerate $$\multiset{\mathcal{S}}_\Re = \set{S ; S \in \multiset{\mathcal{S}} \land \sum_{s \in S} s_w \leq n - E(S) \land \sum_{s \in S} s_{\abs{p}} \leq n \land \sum_{s \in S} s_e \leq 2}$$

where $\multiset{\mathcal{S}}$ is the infinite multiset of $S \in \mathcal{S}$ and $E(S)$ is the number of covered edge squares.\footnote{The number of covered edge squares isn't $\sum_{s \in S} s_e$ because one edge $U$ can cover both edge squares}.

Every unreachable diagram is counted via some $S \in \multiset{\mathcal{S}}_\Re$ if we consider two additional factors:

\begin{enumerate}
	\item Let $F(S)$ be the number of ways to uniquely and disjointly displace $s \in S_{non\_edge} = \set{s ; s \in S \land s_e = 0}$ s.t$.$ $s$ doesn't cover an edge file, and/or move to the opposing edge and reverse $s \in S_{edge}$.
	\item Let $R(S)$ be the number of ways to place between $0 \leq r \leq n-\abs{\sum_{s \in S} s_{\abs{p}}}$ remaining pawns in $n(n-2) - \sum_{s \in S} s_{\abs{q}}$ squares.
\end{enumerate}

Given $C(s)$ is the number of diagrams produced by $s$, then the number of diagrams produced by $S$ is $C(S) = \prod_{s \in S}C(s)F(S)R(S)$.

\subsubsection{Diagram Duplicity}
Diagrams aren't necessarily unique to some $S$. For e.g$.$ in ``Fig.~\ref{unreachablePositionPlusOne}''.

\begin{figure}[H]
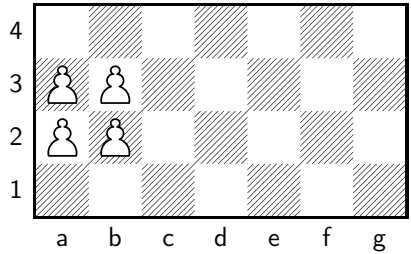

	\begin{center}
		\chessboard[setfen=8/PP6/PP6/8 w - - 0 0,
		maxfield=g4,
		margintopwidth=0pt,
		showmover=false
		] 
	\end{center}
	\caption{The above diagram may be produced by $\{\{[a, b]\}\}$, $\{\{[a, c]\}\}$ and $\{\{[a, b], [a, c]\}\}$ when we consider the factor $R$.}
	\label{unreachablePositionPlusOne}
\end{figure}

To count each diagram only once we use the inclusion-exclusion sieve:
\begin{theorem} The \# of unreachable diagrams is $\sum_{S \in \multiset{\mathcal{S}}_\Re}(-1)^{(\sum_{s \in S} s_z) + 1} C(S)$
\end{theorem}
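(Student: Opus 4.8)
The plan is to prove the identity one diagram at a time: I would exchange the order of summation so that each single‑side pawn diagram $d$ contributes a signed count over the configurations that produce it, and then show that this count is $1$ when $d$ is unreachable and $0$ when it is reachable. For a diagram $d$, let $A(d)$ be the set of file intervals $[L,R]$ such that $v_{[L,R]}$ contains strictly more than $R-L+1$ pawns of $d$. By Theorem~\ref{diagramReachability}, together with the observation used throughout Section~\ref{enumerationSection} that a Hall violator may be taken interval‑shaped, $d$ is unreachable exactly when $A(d)\neq\emptyset$.

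The crux is a bijection, for each $d$, between the pairs consisting of an $S\in\multiset{\mathcal{S}}_\Re$ together with an embedding $\mathrm{emb}$ of $S$ into $d$ (a disjoint board placement of the composite units of $S$, with the forced local arrangements and leftover pawns, all agreeing with $d$) on one side, and the nonempty subsets $B\subseteq A(d)$ on the other. Given $B$, one recovers the composite units as the connected components of $B$ under the relation ``share a file''. Three facts make this work: (i)~$v_{[L,R]}$ occupies exactly the files $L,\dots,R$, so two units are region‑disjoint iff their file intervals are disjoint, and hence distinct components of $B$ automatically give region‑disjoint units; (ii)~the union of an overlap‑connected family of intervals is again an interval, so each component is a ``continuous'' $U$ in the sense of Theorem~\ref{discontinuity}; and (iii)~each such component $U$ is satisfiable --- witnessed by $d$ itself restricted to $V_U$, which uses at most $n$ pawns and overfills every $v_u$, $u\in U$ --- so its type lies in $\mathcal{S}$, and the resulting $S$ lies in $\multiset{\mathcal{S}}_\Re$ because the components sit disjointly inside the $n\times n$ board (which bounds $\sum_{s\in S} s_w$, $\sum_{s\in S} s_{\abs{p}}$ and $\sum_{s\in S} s_e$). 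Conversely, any embedding of an $S\in\multiset{\mathcal{S}}_\Re$ places pairwise region‑disjoint overfull triangular regions in $d$, and the set of their file intervals is exactly the matching $B\subseteq A(d)$.

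Granting this, $C(S)=\prod_{s\in S}C(s)\,F(S)\,R(S)$ counts precisely the pairs $(d,\mathrm{emb})$, and for each embedding $\sum_{s\in S}s_z=\abs{B}$. Swapping the order of summation and using the elementary identity $\sum_{\emptyset\neq B\subseteq A}(-1)^{\abs{B}+1}=1-(1-1)^{\abs{A}}$,
\[
\sum_{S\in\multiset{\mathcal{S}}_\Re}(-1)^{(\sum_{s\in S}s_z)+1}C(S)
=\sum_{d}\sum_{\emptyset\neq B\subseteq A(d)}(-1)^{\abs{B}+1}
=\sum_{d}\bigl(1-[\,A(d)=\emptyset\,]\bigr),
\]
which is the number of $d$ with $A(d)\neq\emptyset$, i.e.\ the number of unreachable diagrams.

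I expect the main obstacle to be point~(iii) above: one must verify that the pruned enumeration of Section~\ref{enumerationSection} --- which discards unsat cores, lengthenings past an unsat point, discontinuous $U$, and one member of each file‑reflected pair --- never drops a \emph{continuous, satisfiable} $U$, so that $\mathcal{S}$ really contains the type of every composite unit realizable on the board (up to displacement and reversal, which $F(S)$ re‑expands). One must also check that the leftover‑pawn factor $R(S)$ does not inflate the multiplicity with which a fixed $d$ is hit by a fixed embedding --- it does not, since for that $d$ the leftover pawns are forced to be exactly the pawns of $d$ outside the placed units --- while still allowing non‑maximal $B$ to contribute, which is the over‑counting that the alternating sign cancels.
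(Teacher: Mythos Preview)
The paper does not actually supply a proof of this theorem: it is stated as the culmination of the inclusion--exclusion set\-up and then the text immediately passes to the sanity check and results. So there is nothing in the paper to compare your argument against at the level of proof; what you have written is strictly more than what the paper offers.

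Your argument is the standard and correct one. Writing $E_u=\{d:v_u\text{ holds }>|u|\text{ pawns of }d\}$ for each interval $u$, the unreachable diagrams are exactly $\bigcup_u E_u$ (Hall's theorem together with the interval shape of minimal violators), and the formula is $\bigl|\bigcup_u E_u\bigr|$ computed by inclusion--exclusion, with the bookkeeping of $\mathcal{S}$, $F(S)$ and $R(S)$ just grouping the terms $\bigl|\bigcap_{u\in B}E_u\bigr|$ by the combinatorial type of the connected components of $B$. Your bijection between embedded $S$'s and nonempty $B\subseteq A(d)$ makes this grouping explicit, and the sign check $\sum_{s\in S}s_z=|B|$ is exactly right. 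Two small points that close the loop on the worries you raise: first, the intervals pruned by Lemma~\ref{noAlwaysUnsatu} (those with $|u|=|v_u|$ or $|u|=n$) can never lie in $A(d)$, since one cannot place more than $|v_u|$ pawns in $v_u$ nor more than $n$ pawns on the board, so no realizable $B$ is lost by the enumeration; second, for a fixed $d$ and fixed embedded $S$, the ``remaining pawns'' counted by $R(S)$ are forced to be exactly the pawns of $d$ outside the placed regions, so $R(S)$ contributes multiplicity one per diagram, as you note. With those checks your proof is complete.
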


\section{Sanity Check} We sanity check our method on boards of width $3 \leq n \leq 7$. For a given $n$, we enumerate all pawn diagrams in the $n(n-2)$ grid, and for each attempt to place pawns on starting squares using OR-Tools \cite{ortools}. We compute for $n = 7$ in about 2 hours because of a few observations: \begin{inparaenum} \item \label{i1} the horizontal reflection of an unsat diagram across the center of the board is unsat \item \label{i2} the row-displacement of an unsat diagram towards the starting files is unsat \item the horizontal displacement of an unsat diagram likely may also be unsat (see code at \cite{epiphainein}) \end{inparaenum}.

\section{Results}
The results obtained are shown in ``Table.~\ref{results}''.

\begin{table}[hbtp]
	\caption{Unreachable Diagrams} \label{results}
	\begin{center}
		\begin{tabular}{cccc}
			\toprule
			\textbf{n}& \textbf{\textit{$\#$Unreachable}}& \textbf{\textit{ \%Unreachable}}& \textbf{\textit{Approx$.$ Time}} \\
			\cmidrule(lr){1-1}\cmidrule(lr){2-2}\cmidrule(lr){3-3}\cmidrule(lr){4-4}
			3 & 0 & 0 & <1s \\
			4 & 18 & 11.04 & <1s \\
			5 & 550 & 11.12 & <1s \\
			6 & 16398 & 08.63 & <1s \\
			7 & 541782 & 06.20 & <1s \\
			8 & 20217623 & 04.35 & 3s \\
			9 & 851074312 & 03.02 & 3m \\
			10 & 40168190051 & 02.10 & 4h \\
			\bottomrule
		\end{tabular}
	\end{center}
\end{table}

{\raggedright

}
\end{document}